\newtheorem{theorem}{Theorem}[section]
\newtheorem{corollary}[theorem]{Corollary}
\newtheorem{lemma}[theorem]{Lemma}
\theoremstyle{definition}
\newtheorem*{definition}{Definition}
\newtheorem{example}[theorem]{Example}
\newtheorem{remark}[theorem]{Remark}
\renewcommand{\@biblabel}[1]{#1.} 
\newcommand{\alg}{\operatorname{alg}}
\renewcommand{\Im}{\operatorname{Im}}
\renewcommand\phi{\varphi}
\renewcommand\epsilon{\varepsilon}
\newcommand\1{\mathbf{1}}
\providecommand{\dupa}[2]{\left\langle #1, #2 \right\rangle}
\providecommand{\scalprod}[2]{\left( #1\, \middle|\, #2 \right)}
\renewcommand{\sp}{\scalprod}
\providecommand{\abs}[1]{\left\lvert#1\right\rvert}
\providecommand{\norm}[1]{\left\lVert#1\right\rVert}
\providecommand{\set}[1]{\left\{ #1\right\}}
\newcommand\trace{\operatorname{tr}}
\newcommand\strace{\operatorname{str}}
\renewcommand{\P}{\mathbb{P}}
\DeclareMathOperator{\Tr}{Tr}
\DeclareMathOperator{\id}{id}
\DeclareMathOperator{\diam}{diam}
\newcommand{\st}{\mid}
\newcommand{\R}{\mathbb{R}\nonscript\hskip.03em}
\newcommand{\Z}{\mathbb{Z}\nonscript\hskip.03em}
\newcommand{\N}{\mathbb{N}\nonscript\hskip.03em}
\newcommand{\C}{\mathbb{C}\nonscript\hskip.03em}
\newcommand{\K}{\mathbb{K}\nonscript\hskip.03em}
\newcommand{\calA}{\mathcal{A}}
\newcommand{\calG}{\mathcal{G}}
\newcommand{\calV}{\mathcal{V}}
\newcommand{\calE}{\mathcal{E}}
\newcommand{\calB}{\mathcal{B}}
\newcommand{\calU}{\mathcal{U}}
\newcommand{\calS}{\mathcal{S}}
\newcommand{\calF}{\mathcal{F}}
\newcommand{\calP}{\mathcal{P}}
\newcommand{\calC}{\mathcal{C}}
\newcommand{\calZ}{\mathcal{Z}}
\newcommand{\cHg}{\mathcal{H}_\Gamma}
\newcommand{\from}{\colon}
\newcommand{\hm}[1]{\textbf{*}\leavevmode{\marginpar{\tiny%
$\hbox to 0mm{\hspace*{-0.5mm}$\leftarrow$\hss}%
\vcenter{\vrule depth 0.1mm height 0.1mm width \the\marginparwidth}%
\hbox to 0mm{\hss$\rightarrow$\hspace*{-0.5mm}}$\\\relax\raggedright #1}}}
\begin{document}

\title{Uniform existence of the integrated density of states on metric Cayley graphs}

\author{Felix Pogorzelski, Fabian Schwarzenberger and Christian Seifert}

\maketitle

\begin{abstract}
	Given an arbitrary, finitely generated, amenable group we consider ergodic Schr\"odinger operators on a metric Cayley graph with random potentials and random boundary conditions. We show that the normalised eigenvalue counting functions of finite volume parts converge uniformly. The integrated density of states (IDS) as the limit can be expressed by a Pastur-Shubin formula. The spectrum supports the corresponding measure and discontinuities correspond to the existence of compactly supported eigenfunctions. In this context, the present work generalises the hitherto known uniform IDS-approximation results for operators on the $d$-dimensional metric lattice to a very large class of geometries.  
\end{abstract}

MSC 2010: 47E05, 34L40, 47B80, 81Q10 \newline

Keywords: random Schr\"odinger operator, metric graph, quantum graph, integrated density of states

\section{Introduction}
The investigation of the spectrum of random Schr\"odinger operators has a long history in mathematical physics. 
One reason for this is that many solution properties of differential equations are encoded in the spectrum of the involved operator. 
The most prominent example for a result describing this connection is the RAGE-Theorem (see, e.g.\@ \cite[Theorem 4.1.19]{s2001}). 

In this paper we study spectral properties of random Schr\"odinger operators on
 Cayley graphs over finitely generated, amenable groups $\calG$. The graphs are defined 
as metric graphs, meaning that each edge is associated with an interval of a 
certain length. 
Hence, the operator acts on functions defined on the edges or the associated intervals. 
Selfadjointness is obtained by choosing appropriate 
boundary conditions at the vertices. 
We will only deal with local boundary conditions acting separately on each vertex $v$ and taking into account the 
functions linked with the edges connected to $v$. The involved randomness occurs in the 
choice of the potential, as well as in the boundary conditions. That way, we obtain a family $(H_{\omega})_{\omega \in \Omega}$ of selfadjoint operators, where $\Omega = (\Omega, \mathcal{A}, \mathbb{P})$ is a probability space endowed with some measure preserving, ergodic action by the amenable group $\mathcal{G}$ under consideration.  
Studying spectral properties of $(H_{\omega})$, it turns out that the relavant object is a function describing the distribution of the spectrum, the so-called \emph{spectral distribution function} (SDF). 
The SDF at point $\lambda\in \R$ is given via the trace of the spectral projection of the operator on the interval $(-\infty,\lambda]$, see also \eqref{eq:IDS}. In the special case where the operator in consideration is a Laplacian, on may intuitively think of the SDF to measure the number of electron energy levels below a given energy per unit volume.

As such a projection of an infinite dimensional operator is a rather 
abstract object, it is useful to think of a more constructive way to understand the spectral distribution. 
One approach in this direction is to consider a ``growing'' sequence $(Q_j)_{j\in\N}$ of finite sets exhausting the group $\mathcal{G}$. The approximating operators shall then be given as restrictions to these sets, such that one obtains a sequence $(H_\omega^{Q_j})_{j \in \mathbb{N}}$ 
of selfadjoint operators with discrete spectrum. For each $j$ one defines the 
eigenvalue counting function $n_\omega^{Q_j}$ as the function which maps 
$\lambda\in\R$ to the number of eigenvalues of $H_\omega^{Q_j}$, which do not 
exceed $\lambda$. Here one counts eigenvalues according to their 
multiplicities. 
Now the central issue is the behaviour of suitably normalised counting functions.
Precisely, the question arises whether there is some function
$N\from\R\to\R$ such that one can give meaning to the limit relation
\begin{equation}\label{def:ids}
N(\lambda):=\lim_{j\to\infty}\frac{1}{\abs{\calE_{Q_j}}}n_\omega^{Q_j}(\lambda).
\end{equation}
{\em If} the limit exists in a suitable sense, then we call $N$ the \emph{integrated density of states} 
(IDS). 
With this construction at hand several questions immediately arise:
\begin{itemize}
 \item Does the limit in \eqref{def:ids} exist? If so, in which topology?
 \item Is $N$ independent of the sequence $(Q_j)$ and/or of $\omega \in \Omega$?
 \item Does the IDS equal the SDF?
\end{itemize}
If the answer to the last question is positive, then one says that the \emph{Pastur-Shubin-trace formula} holds. Note that there are situations where the Pastur-Shubin-trace formula is not valid, c.f. \cite{as1993}.
Having defined the announced distribution functions, let us now discuss the physical relevance of these objects. 
As mentioned before, a considerable interest in the spectral type of selfadjoint operators has developed over the last decades. Results of this kind can be used to derive qualitative and quantitative properties of solutions to the corresponding difference or differential operator \cite{Teschl-09,Last-96}.  
To determine the spectral type, one may study particular properties of the SDF, such as (dis-)continuity or the asymptotic behaviour at the spectral edges. For instance, Lifshitz tails \cite{Nakao-77} and Wegner estimates \cite{Wegner-81} are crucial ingredients in the theory of Anderson localisation \cite{Anderson-58, s2001,PasturF-92}. 
Moreover, the investigation of continuity properties or the asymptotic behaviour of the IDS (or SDF) relies always on approximation techniques of the kind as described above.
In the light of that, the existence of the IDS and the validity of the 
Pastur-Shubin trace formula are valuable reference quantities when studying the
 spectral type of an operator, as well as solution properties of operator 
equations.

The first investigations concerning the IDS are given in the seminal papers of Pastur \cite{p1971} and Shubin \cite{s1979},
where the authors obtained weak convergence in a euclidean setting. 
Since then, a wide range of geometric situations and various classes of operators have been examined, see e.g.\@ \cite{as1993,lpv2004, s1990} 
for the continuous setting and e.g.\@ \cite{dlm2003, msy2003, my2002} for discrete geometries. 
It is well known that weak convergence of distribution functions only implies pointwise convergence at the continuity points of the limit function. 
On the other hand, the IDS may exhibit a large set of discontinuity points (which may be even dense in the spectrum), c.f.~\cite{kls2003,ls20031,v2005}. 
Therefore, in order to preserve essential spectral information in the approximation process (e.g.\@ spectral jumps indicating the existence of compactly 
supported eigenfunctions, cf.\@ \cite{lv2009}), the question of stronger kinds of convergence of the IDS arises. 
This motivates the main goal of our paper to verify uniform convergence for very general geometric settings.  
One particular approach 
is to use Banach space valued, almost-additive ergodic theorems. For operators on Delone sets modeling a quasicrystal structure, this method was first applied in \cite{ls2006}.   
For combinatorial Cayley graphs, convergence results for almost-additive functions have been used in \cite{lmv2008} for $\mathcal{G} =\mathbb{Z}^d$, as well as in \cite{lsv2011, lsv} for a finitely generated, amenable group $\mathcal{G}$ containing a monotile F{\o}lner sequence with respect to a symmetric grid. In \cite{ps2012}, the authors prove an almost-additive convergence result which is valid for all amenable groups. This latter theorem, in combination with the famous Lindenstrauss ergodic theorem for amenable groups (cf.\@ \cite{Lindenstrauss-01}) will be the major ingredient for our following elaborations, where we treat the general case of uniform IDS-approximation on all {\em metric} Cayley graphs induced by finitely generated, amenable groups. Precisely, we give positive answers to all of the above formulated questions and we prove uniformity for the limit relation given by \eqref{def:ids}.
 Hence, our major Theorem~\ref{thm:main} significantly extends the range of possible geometries. In particular, it allows for non-abelian structures. Thus, the present paper brings out a direct generalization of the results in \cite{glv2007,glv2008}, where the authors assumed that $\mathcal{G}=\mathbb{Z}^d$. Moreover, our main result is complementary to the elaborations in \cite{lppv2009} concerning an abstract CW complex graph model with group action, as well as to the ones in \cite{lsv2011,lsv,ps2012}, where the authors achieve uniform existence of the IDS in the context of combinatorial graphs over amenable groups.   

Let us briefly describe the content of this paper.
In section \ref{sec:amenable} we give some basic features on the geometric setting. 
Section \ref{sec:random_SO} describes the operator families in question. 
Restrictions to finite subgraphs are discussed in section \ref{sec:restriction}, where we also state the main results.
In section \ref{sec:ergodic} we apply an ergodic theorem obtained in \cite{ps2012} (see \cite{lsv2011} for a special case) 
to a sequence of spectral shift functions for an exhaustion of subgraphs.
These results are used in section \ref{sec:proof} to prove our main theorem. Section 7 is devoted to an application. 
We show uniform approximation of the IDS for the Heisenberg group (which is, in fact, a non-abelian group).

\section{Metric Cayley graphs over amenable groups}
\label{sec:amenable}
Let $\calG$ be a group, $\calS\subseteq \calG$ a finite but not 
necessarily symmetric set of generators and $\id\in\calG$ the unit element. 
We define the distance between two elements $g,h\in\calG$ to be the smallest 
number of elements in $\calS\cup\calS^{-1}$ one needs to turn $h$ into $g$ by left multiplication, i.\,e.\ 
\[
 d(g,h):=\begin{cases}\min\{k\in \N \st  \exists s_1,\dots,s_k\in \calS\cup\calS^{-1} 
\text{ with } s_1 \cdots s_kh=g\} &\text{if } g\neq h\\ 0 &\text{otherwise}.  \end{cases}
\]
We denote the set of all finite subsets of $\calG$ by $\calF$. The diameter of a set $Q\in \calF$ is given by 
$\diam Q:=\max\{d(g,h)\st g,h\in Q\}$. For a subset $Q\subseteq\calG$ 
and $g\in \calG$ we set $d(g,Q):=\min\{d(g,h)\st h\in Q \}$.
Given $R\in \N$ and $Q\in\calF$ we define $\partial^R Q=\{ g\in \calG\st g\in Q, d(g,G\setminus Q)\leq R \text{ or } g\notin Q, d(g, Q)\leq R\}$.
We assume that $\calG$ is amenable, i.\,e., there exists a sequence $(Q_j)_{j\in \N}$ of elements in $\calF$ such that
\[
  \lim_{j\to\infty}\frac{|\calS Q_j\setminus Q_j|}{|Q_j|} = 0.
\]
Such a sequence $(Q_j)$ is called \emph{F\o lner sequence}. 
A F\o lner sequence is said to be \emph{tempered} if there exists $C>0$ such that
\[
 \left\vert \bigcup_{k=1}^{j-1} Q_k^{-1}Q_j \right\vert \leq C\vert Q_j\vert
\]
 holds for all $j\in \N$. 

\begin{remark}\label{remark_Folner}
\begin{itemize}
\item [(a)] It is easy to see that for a F\o lner sequence $(Q_j)_{j\in \mathbb N}$, the limit relation 
\[
 \lim_{j\to \infty} \frac{|\partial^R Q_j|}{|Q_j|} = 0
\]
holds true for all $R\in \mathbb N$, c.f.\ \cite{lsv2011}.
\item [(b)] Note that each F\o lner sequence has a tempered subsequence, 
c.f.~\cite{Lindenstrauss-01}. Therefore there exists a tempered F\o{}lner 
sequence in each amenable group.
\end{itemize}
\end{remark}
For a given group $\calG$ and a finite set of generators $\calS$ we denote the induced (directed) metric Cayley graph by 
$\Gamma= \Gamma(\calG,\calS) = (\calV,\calE,\gamma)$, i.e., $\calV = \calG$ is the vertex set, 
$\calE$ the set of edges and $\gamma=(\gamma_0,\gamma_1)\from \calE\to \calV\times\calV$ associates to each edge 
$e\in\calE$ the starting vertex $\gamma_0(e)$ and the end vertex $\gamma_1(e)$. 
There will be an edge $e$ from $v$ to $w$ if there exists $s\in \calS$ such that $w=sv$. Such an edge $e$ will then be said to be generated by $s$.
Every edge $e\in\calE$ generated by $s\in\calS$ will be identified with the interval $[0,l_s]$, where $l_s\in (0,\infty)$ for all $s\in\calS$.

\begin{example}\label{ex:Z2}
 Let $\calG=\Z^2$ and set
\[
\calS_1=\{(1,0),(0,1)\}\quad\text{and}\quad 
\calS_2=\{(0,0),(1,1),(1,0),(-1,0)\}.
\]
 Then $\calS_1$ and $\calS_2$ are generating systems for $\calG$. We denote the corresponding metric Cayley graphs by $\Gamma_1=\Gamma_1(\calG,\calS_1)$ and $\Gamma_2=\Gamma_2(\calG,\calS_2)$. Note that while $\Gamma_1$ is the usual graph of $\Z^2$
with standard edges, $\Gamma_2$ contains multiple edges as well as loops.

\begin{figure}[h] \centering \label{abb:dsf}
\begin{tikzpicture}[ scale=0.8]
{
{\color{black}
\foreach \y in {1,2,3,4}{
	\foreach \x in {1,2,3,4,5,6}
		\filldraw (\x,\y) circle (3pt);
}
\foreach \y in {1,2,3,4}{
\draw[-latex] (1,\y) --(1.9,\y);
\draw[-latex] (2,\y) --(2.9,\y);
\draw[-latex] (3,\y) --(3.9,\y);
\draw[-latex] (4,\y) --(4.9,\y);
\draw[-latex] (5,\y) --(5.9,\y);
}
\foreach \y in {1,2,3,4,5,6}{
\draw[-latex] (\y,1) --(\y,1.9);
\draw[-latex] (\y,2) --(\y,2.9);
\draw[-latex] (\y,3) --(\y,3.9);

}
}
}
{\color{black}
\foreach \y in {1,2,3,4}{
	\foreach \x in {9,10,11,12,13,14}{
		\filldraw (\x,\y) circle (3pt);
		\draw[-latex,rounded corners=2.8pt] (\x,\y +0.081) --++ (-0.2,0.4) --++ (0.2,0.2) --++ (0.2,-0.2) --++ (-0.2,-0.4);
}
}
%

\foreach \y in {1,2,3,4}{
\path[-latex]  (9,\y) edge [bend left]  (9.9,\y);
\path[-latex]  (10,\y) edge [bend left]  (10.9,\y);
\path[-latex]  (11,\y) edge [bend left]  (11.9,\y);
\path[-latex]  (12,\y) edge [bend left]  (12.9,\y);
\path[-latex]  (13,\y) edge [bend left]  (13.9,\y);
\path[-latex]  (10,\y) edge [bend left]  (9.1,\y);
\path[-latex]  (11,\y) edge [bend left]  (10.1,\y);
\path[-latex]  (12,\y) edge [bend left]  (11.1,\y);
\path[-latex]  (13,\y) edge [bend left]  (12.1,\y);
\path[-latex]  (14,\y) edge [bend left]  (13.1,\y);

}
\foreach \x / \xt in {1/2,2/3,3/4,4/5,5/6}{
\foreach \y / \yt in {1/2,2/3,3/4}{
\draw[-latex] (\x +8,\y) --(\xt +8-0.08,\yt-0.08);
}}
}
\end{tikzpicture}
\caption{Illustration of $\Gamma_1(\calG,\calS_1)$ and $\Gamma_2(\calG,\calS_2)$ from Example \ref{ex:Z2} }
\end{figure}
\end{example}

\section{Random Schr\"odinger Operators on graphs}
\label{sec:random_SO}

All function spaces appearing in this paper will be $\K$-valued, where 
$\K\in\set{\R,\C}$.

For $s\in\calS$ let $\calB_s\subseteq L^\infty(0,l_s)$ be a finite subset.
For $e\in \calE$ generated by $s\in\calS$ let $V_e\in \calB_s$.
In the Hilbert space
\[
\cHg := \bigoplus_{v\in \calV} \bigoplus_{s\in\calS} L^2(0,l_s)
\]
we define the maximal operator
\begin{align*}
	D(\hat{H}) & := \bigoplus_{v\in \calV} \bigoplus_{s\in\calS} W^{2,2}(0,l_s) ,\\
	(\hat{H}f)_e & := -f_e'' + V_ef_e \quad (e\in \calE).
\end{align*}

In order to obtain selfadjoint realisations we need to impose boundary conditions at the vertices. We will not consider the most general boundary conditions, but rather restrict ourselves to so-called local boundary conditions.

For $v\in \calV$, the collections
\[
 \calE_{v,j}:=\set{e\in \calE \st \gamma_j(e)=v}\qquad (j=0,1)
\]
describe the sets of all edges starting or ending at $v$,
respectively, and
\[
 \calE_{v}:=\big(\calE_{v,0}\times\{0\}\big)\cup\big(\calE_{v,1}\times\{1\}\big)
\]
encodes all edges connected with $v$ (where loops are counted twice).

For $f\in D(\hat{H})$ and $v\in\cal V$ we define the \emph{trace mapping} (or \emph{boundary value
mapping}) $\trace_v f \in \K^{\calE_v}$ by
\[
 (\trace_v f) (e,j) := \begin{cases} f_e(0) & (e,j)\in \calE_v, j=0,\\
                        f_e(l_s) & (e,j)\in \calE_v, j=1, e\,\text{generated by}\, s.
                       \end{cases}
\]
Furthermore, define the \emph{signed trace}
$\strace_v f' \in \K^{\calE_v}$ by 
\[
 (\strace_v f') (e,j):= \begin{cases} f_e'(0) & (e,j)\in \calE_v, j=0,\\
                        -f_e'(l_s) & (e,j)\in \calE_v, j=1, e\,\text{generated by}\, s.
                       \end{cases}
\]

\begin{remark}\begin{itemize}
               \item [(a)] Note that $W^{2,2}(0,l)\subseteq C^1[0,l]$ by standard Sobolev arguments and hence for $f\in D(\hat{H})$, the vectors $\trace_vf$ and $\strace_v f'$ are well-defined ($v\in \calV$).
	       \item [(b)] The definition of the signed trace implies that the 
orientation of the edges plays a minor role. In particular, only the 
boundary conditions (see the definition below) take into account the
 direction of the edges.
              \end{itemize}	
\end{remark}

\begin{definition}[local boundary conditions]
	Let $v\in \calV$. Local boundary conditions at $v$ are encoded in a subspace $U_v\subseteq \K^{\calE_v}\oplus\K^{\calE_v}$ with $\dim U_v = \abs{\calE_v}$ such that
	\[\sp{f_1'}{f_2} - \sp{f_1}{f_2'} = 0 \quad((f_1,f_1'),(f_2,f_2')\in U_v),\]
	where $\sp{\cdot}{\cdot}$ denotes the usual inner product in $\K^{\calE_v}$.
	We say that $f\in D(\hat{H})$ satisfies the \emph{local boundary condition $U_v$ at $v\in V$}, if $(\trace_vf,\strace_v f')\in U_v$.
	\emph{Local boundary conditions} are a family $U:= (U_v)_{v\in\calV}$ of local boundary conditions at each vertex $v\in \calV$.
\end{definition}

For a local boundary condition $U$ the operator
\begin{align*}
	D(H) & := \set{f\in D(\hat{H}) \st (\trace_v f, \strace_v f')\in  U_v 
\quad (v\in \calV)}, \\
	(Hf)_e & := (\hat{H}f)_e = -f_e'' + V_e f_e \quad(e\in \calE)
\end{align*}
is selfadjoint, cf. \cite{ks99,kps08,kuc04,glv2007}.

\begin{example}
	(a) Dirichlet boundary conditions. Let $U_v^D:= \set{0}^{\calE_v}\oplus \K^{\calE_v}$. Then $U_v^D$ encodes Dirichlet boundary conditions at $v$, since $\trace_v f = 0$ ($f\in D(H)$).

	(b) Neumann boundary conditions. Let $U_v^N:= \K^{\calE_v}\oplus\set{0}^{\calE_v}$. Then $U_v^N$ encodes Neumann boundary conditions at $v$, since $\strace_v f' = 0$ ($f\in D(H)$).
\end{example}

\begin{example}[Dirichlet-Laplacian]
	Let $V_e = 0$ for all $e\in \calE$. Then the operator $H$ with Dirichlet 
boundary conditions $(U_v^D)_{v\in\calV}$ is called \emph{Dirichlet 
Laplacian} and is denoted by $-\Delta_D$. We have
	\begin{align*}
		D(-\Delta_D) & = \bigoplus_{v\in \calV} \bigoplus_{s\in\calS} 
W_{0}^{1,2}(0,l_s)\cap W^{2,2}(0,l_s),\\
		(-\Delta_D f)_e & = -f_e'' \quad(e\in\calE).
	\end{align*}
\end{example}

Now, we introduce randomness in the choice of potentials and boundary conditions.

Note that $\calG$ acts on $\Gamma$ in the following way: For $e\in \calE$ and $g\in \calG$ there is also a unique edge $e\circ g\in \calE$ connecting $\gamma_0(e)g^{-1}$ and $\gamma_1(e)g^{-1}$. Shorthand, we can therefore write
\[\gamma(e\circ g) = (\gamma_0(e)g^{-1},\gamma_1(e)g^{-1}).\]

Let $(\Omega,\calA,\P)$ be a probability space and let $\calG$ act ergodically on $(\Omega,\calA,\P)$, i.\,e., if
$\alpha\from \calG\times\Omega\to \Omega$ is the group action on $\Omega$, then every subset of $\Omega$ which is invariant under $(\alpha_g)_{g\in\calG}$ has measure either zero or one. Additionally we want $\alpha$ to act measure preservingly, i.\,e.\ $\P(A)=\P(\alpha_g(A))$ for all $g\in \calG$ and all $A\in\calA$.

A random potential is a map $V\from \Omega\to 
\prod_{v\in\calV}\prod_{s\in\calS} \calB_s$ satisfying
\begin{equation}\label{eq:potential}
V(\alpha_g(\omega))_{e\circ g} = V(\omega)_e \quad (g\in \calG, e\in \calE).
\end{equation}

Since a Cayley graph is regular (in the sense that every vertex has the same degree and for two vertices there exists a bijective mapping between the adjacent edges at these vertices), we can choose local boundary conditions $U_{\id}$ at $\id\in \calV$ and then shift these boundary conditions to an arbitrary $v\in\calV$ to obtain a local boundary condition at $v$.
Hence we can choose random boundary conditions in the following way:

Let $\calU$ be a finite set of local boundary conditions at $\id$. A random boundary condition is a map $U\from \Omega\to \prod_{v\in \calV} \calU$ satisfying
\begin{equation}\label{eq:bcond}
 U(\alpha_g(\omega))_{v} = U(\omega)_{vg} \quad(g\in\calG, v\in\calV).
\end{equation}

The family of random Schr\"odinger operators $(H_\omega)_{\omega\in\Omega}$ on $\cHg$ is defined by
\begin{align}
	D(H_\omega) & := \set{f\in D(\hat{H}) \st (\trace_v f,\strace_v f')\in 
U(\omega)_v \quad(v\in\calV)}, \label{def:H_omega1} \\
	(H_\omega f)_e & := -f_e'' + V(\omega)_e f_e \quad(e\in\calE), \label{def:H_omega2}
\end{align}
for $\omega\in\Omega$. For each $\omega\in \Omega$, $H_\omega$ is selfadjoint and semibounded from below. More precisely, there is $C\geq0$ such that $H_\omega + C\geq 0$ for all $\omega\in\Omega$.

\section{Restrictions to finite subsets}
\label{sec:restriction}

Let $Q\subseteq \calG$ be a finite subset. The associated subgraph $\Gamma_Q=(\calV_Q,\calE_Q,\gamma_Q)$ of $\Gamma$ is defined as follows:
\[\calE_Q := \bigcup_{v\in Q} \calE_{v,0},\quad	\calV_Q := Q\cup \calS Q, \quad\gamma_Q := \gamma|_{\calE_Q}.\]

We also define inner vertices $\calV_Q^i$ and boundary vertices $\calV_Q^\partial$ by
\[	\calV_Q^i := \set{v\in \calV_Q \st \calE_{v,0}\cup \calE_{v,1} \subseteq \calE_Q}, \quad \calV_Q^\partial := \calV_Q\setminus \calV_Q^i,\]

and accordingly inner edges $\calE_Q^i$ and boundary edges $\calE_Q^\partial$ by
\[	\calE_Q^i  := \set{e\in \calE_Q \st \gamma_0(e),\gamma_1(e)\in \calV_Q^i},\quad 
	\calE_Q^\partial  := \calE_Q\setminus \calE_Q^i.\]

We define the restriction $H_\omega^Q$ of $H_\omega$ to $\Gamma_Q$ on 
\[\mathcal{H}_{\Gamma_Q} = \bigoplus_{v\in Q} \bigoplus_{s\in\calS} L^2(0,l_s)\]
by
\begin{align*}
	D(H_\omega^Q) & := \Big\{f\in \bigoplus_{v\in Q} \bigoplus_{s\in\calS} W^{2,2}(0,l_s) \st (\trace_v f,\strace_vf') \in U(\omega)_v \quad(v\in \calV_Q^i),\\
			& \hspace{12.9em} (\trace_v f,\strace_vf') \in U_v^D \quad(v\in \calV_Q^\partial)\Big\},\\
	(H_\omega^Q f)_e & := -f_e'' + V(\omega)_e f_e \quad(e\in \calE_Q).
\end{align*}

This operator is again selfadjoint and semibounded from below.
Furthermore, $H_\omega^Q$ has purely discrete spectrum; cf.\ \cite[Theorem 18]{kuc04}.

Let us enumerate the eigenvalues $(\lambda_n(H_\omega^Q))_{n\in\N}$ as an increasing sequence, counting their multiplicities. The \emph{eigenvalue counting function} $n_\omega^Q\from \R\to \N_0$ is defined by
\[n_\omega^Q(\lambda) := |\set{n\in\N \st \lambda_n(H_\omega^Q)\leq \lambda}|= \Tr \1_{(-\infty,\lambda]}(H_\omega^Q).\]
Then $n_\omega^Q$ is monotonically increasing and right continuous, i.e. a distribution function. 
The volume-scaled version of $n_\omega^Q$ will be denoted by $N_\omega^Q$, i.\,e.,
\[N_\omega^Q(\lambda):= \frac{1}{\abs{\calE_Q}}n_\omega^Q(\lambda) \quad(\lambda\in \R).\]
It is associated to a pure point measure $\mu_\omega^Q$.
Note that $\abs{\calE_Q} = \abs{\calS}\abs{Q}$.

We now state the main theorem of this paper.

%

\begin{theorem}\label{thm:main}
	Let $(Q_j)_{j\in\N}$ be a tempered F\o lner sequence in $\calG$. Then there is $N\from \R\to\R$ monotone increasing and right continuous (i.\,e.\ a distribution function), such that
	\[\lim_{j\to\infty} \norm{N_\omega^{Q_j}-N}_\infty = 0\]
	for $\P$-a.\,a.\ $\omega\in\Omega$. In particular, $N_\omega^{Q_j}\to N$ pointwise for $\P$-a.\,a.\ $\omega\in \Omega$.

	Furthermore, for $\lambda\in\R$ and $Q\subseteq \calG$ finite
	\begin{equation}\label{eq:IDS}
		N(\lambda) = \frac{1}{\abs{\calE_Q}} \int_\Omega \Tr\left(\1_{\calE_Q} \1_{(-\infty,\lambda]}(H_\omega)\right)\, d\P(\omega).
	\end{equation}
	Note that $N(\lambda)$ does not depend on the choice of $Q$.
\end{theorem}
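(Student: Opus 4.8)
The plan is to encode the restricted eigenvalue counting functions as a single equivariant, almost-additive object indexed by finite subsets of $\calG$, and then to feed it into the Banach-space-valued ergodic theorem of \cite{lsv2011}, which is designed precisely to output uniform (sup-norm) convergence of such densities along F\o lner sequences. Concretely, I would regard $Q\mapsto n_\omega^Q$ as a map from $\calF$ into the space of distribution functions, with differences measured in $\norm{\cdot}_\infty$; since every $H_\omega^Q$ satisfies $H_\omega^Q\ge -C$ with the same $C$ as for $H_\omega$, all these functions vanish below $-C$ and share the universal (Weyl) high-energy growth, so their pairwise differences are genuinely bounded and the sup-norm is the right gauge.

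The covariance needed for the ergodic theorem is essentially immediate from the definitions: relations \eqref{eq:potential} and \eqref{eq:bcond} make $H_{\alpha_g\omega}^{Qg}$ unitarily equivalent to $H_\omega^Q$ via the graph isomorphism induced by $g$, whence $n_{\alpha_g\omega}^{Qg}=n_\omega^Q$ for all $g\in\calG$. The real work is the almost-additivity estimate, and this is where I expect the main difficulty to lie. Given a partition of a finite set into pieces, I would compare $H_\omega^Q$ with the operator obtained by additionally imposing Dirichlet conditions $U_v^D$ at every vertex lying on an interface between pieces. This decoupled operator is the orthogonal direct sum of the pieces, so its counting function is exactly the sum of the piecewise counting functions, while it differs from $H_\omega^Q$ only through a change of local boundary condition at interface vertices. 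Each such change replaces the $\abs{\calE_v}$-dimensional boundary space $U_v\subseteq\K^{\calE_v}\oplus\K^{\calE_v}$ by the Dirichlet one, so the associated spectral shift function is bounded in $\norm{\cdot}_\infty$ by a constant multiple of $\abs{\calE_v}$; summing over interface vertices shows that $n_\omega^Q$ and the sum of the piece counting functions differ, uniformly in $\lambda$ and $\omega$, by at most a constant times the number of boundary vertices, hence by at most a constant times $\abs{\partial^R Q}$. By Remark \ref{remark_Folner}(a) this boundary term is $o(\abs{Q})$, which is exactly the almost-additivity hypothesis of \cite{lsv2011}.

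With covariance and almost-additivity in hand, together with the obvious a priori bounds on the normalised densities, the ergodic theorem of \cite{lsv2011} applies along the fixed tempered, symmetrically tiling F\o lner sequence and yields a deterministic distribution function $N$ with $\norm{N_\omega^{Q_l}-N}_\infty\to 0$ for $\P$-a.a.\ $\omega$. Almost-additivity also forces the limiting density to be independent of the chosen F\o lner sequence, so the conclusion holds for the arbitrary F\o lner sequence $(Q_l)$ in the statement; monotonicity and right-continuity of $N$ pass to the limit from the $N_\omega^{Q_l}$, and pointwise convergence is a trivial consequence of uniform convergence.

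It remains to identify $N$ with the Pastur-Shubin expression. The ergodic theorem produces $N$ as the $\P$-expectation of the normalised finite-volume density; replacing the Dirichlet restriction $H_\omega^Q$ by the full operator $H_\omega$ restricted to the edges of $Q$ costs only another boundary term, negligible after normalisation, so $N(\lambda)$ equals $\abs{\calE_Q}^{-1}\int_\Omega \Tr(\1_{\calE_Q}\1_{(-\infty,\lambda]}(H_\omega))\,d\P(\omega)$. Independence of $Q$ is then seen by writing this trace as a sum of per-edge partial traces and using the covariance of $H_\omega$ together with the measure-preserving, ergodic action: every edge contributes the same expectation, so dividing by $\abs{\calE_Q}$ gives a quantity that does not depend on $Q$. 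The decisive and most delicate point throughout is the spectral-shift bound underlying almost-additivity, namely that interchanging local boundary conditions at a vertex really is a finite-rank operation whose effect on the counting function is bounded, uniformly in the energy and in $\omega$, by the number of altered boundary conditions.
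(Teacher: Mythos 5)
Your overall strategy coincides with the paper's: covariance of $Q\mapsto n_\omega^Q$ under the group action, almost additivity via the bounded spectral shift caused by switching local boundary conditions at interface vertices, and then the Banach-space-valued ergodic theorem of \cite{lsv2011} along a tempered, symmetrically tiling F\o lner sequence. You also correctly isolate the decisive technical input, which is exactly the paper's Lemma \ref{lem:xi_est} (a deficiency-index bound on the spectral shift) together with the estimate \eqref{eq:xi_estimate}. One point you gloss over: the ergodic theorem needs a genuinely Banach-space-valued field $F\from\calF\to X$, and $n_\omega^Q$ is unbounded in $\lambda$, so ``the space of distribution functions with differences measured in $\norm{\cdot}_\infty$'' is not such an $X$. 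The paper's fix is to subtract the exactly additive Dirichlet reference $n_D^Q=\abs{\calE_Q}\,n_D$ and run the argument for the bounded spectral shift $\xi_\omega^Q=n_\omega^Q-n_D^Q\in\calB(\R)$ (Corollary \ref{cor_convergence}), recovering $N_\omega^{Q_l}=\abs{\calE_{Q_l}}^{-1}\xi_\omega^{Q_l}+n_D$ afterwards. This is a small, fixable formalization issue, but as written your plan does not typecheck against the theorem you invoke.

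The genuine gap is in the identification of $N$ with the Pastur--Shubin expression \eqref{eq:IDS}. You assert that replacing $\Tr\bigl(\1_{(-\infty,\lambda]}(H_\omega^{Q_l})\bigr)$ by $\Tr\bigl(\1_{\calE_{Q_l}}\1_{(-\infty,\lambda]}(H_\omega)\bigr)$ ``costs only another boundary term,'' but this does not follow from the finite-rank perturbation argument you use elsewhere: $\1_{(-\infty,\lambda]}$ is a discontinuous function of the operator, so a finite-rank difference of resolvents does not yield, pointwise in $\lambda$, a trace estimate for the difference of the corresponding spectral projections (and $\1_{\calE_{Q_l}}\1_{(-\infty,\lambda]}(H_\omega)$ is not even a projection whose trace counts eigenvalues; that it is trace class at all requires a separate argument, given in the paper's appendix). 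The paper instead proves equality of the \emph{measures} associated with $N$ and $\tilde N$ by establishing vague convergence of the approximating measures, reducing via Stone--Weierstra{\ss} to $f(t)=(t-z)^{-1}$, and only for such resolvents exploiting the rank bound $4\abs{\calS}\abs{\calV_{Q_l}^\partial}$ together with the norm bound $\abs{\Im z}^{-1}$ to get $\abs{\calE_{Q_l}}^{-1}\abs{\Tr(\1_{\calE_{Q_l}}f(H_\omega)-f(H_\omega^{Q_l}))}\to 0$. Some argument of this kind (or another way to pass from resolvents to the distribution functions) is needed; as stated, your final step would not go through.
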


The distribution function $N$ is called the \emph{integrated density of states (IDS)}. Let $\mu$ be the corresponding measure. Theorem \ref{thm:main} states that the IDS is the uniform limit of the normalised eigenvalue counting functions on finite subgraphs and can be expressed by a Pastur-Shubin trace formula in \eqref{eq:IDS}. The operator $\Tr$ denotes the usual trace in $L^2$.

By ergodicity of $(H_\omega)_{\omega\in\Omega}$ we obtain the following Theorem, which is an analogue of \cite[Theorem 5]{glv2007}. For the proof we may apply the general framework of \cite[Theorem 5.1]{lpv2007}.

\begin{theorem}
	There exist subsets $\Sigma, \Sigma_{\rm pp}, \Sigma_{\rm sc}, \Sigma_{\rm ac}, \Sigma_{\rm disc}, \Sigma_{\rm ess} \subseteq \R$ and $\Omega'\subseteq \Omega$ with $\P(\Omega') = 1$ such that $\sigma(H_\omega) = \Sigma$ and $\sigma_\bullet(H_\omega) = \Sigma_{\bullet}$ for all the spectral types $\bullet\in \set{\rm pp,sc,ac,disc,ess}$ and all $\omega\in \Omega'$.
\end{theorem}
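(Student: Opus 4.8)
The plan is to deduce the statement from the general framework for ergodic operator families in \cite[Theorem 5.1]{lpv2007}. That framework yields deterministic spectra and spectral components once one exhibits a measurable self-adjoint family together with a unitary representation of $\calG$ implementing the group action covariantly. Since ergodicity and measure-invariance of $\alpha$ are part of the standing assumptions, the whole proof reduces to constructing such a representation in our concrete setting and to verifying covariance and measurability.

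First I would build the representation. The group acts on the edge set by $e\mapsto e\circ g$, and since $\gamma_j(e\circ g)=\gamma_j(e)g^{-1}$ this action preserves the orientation of each edge as well as its identification with $[0,1]$. Hence one may define $U_g\from\cHg\to\cHg$ by $(U_gf)_e := f_{e\circ g^{-1}}$, which merely permutes the summands of $\cHg=\bigoplus_{e\in\calE}L^2(0,1)$ and is therefore unitary with $U_g^{-1}=U_{g^{-1}}$, the assignment $g\mapsto U_g$ being a representation of $\calG$. The crucial point is the covariance relation
\[U_g H_\omega U_g^{-1} = H_{\alpha_g\omega}\qquad(g\in\calG,\ \omega\in\Omega).\]
At the level of the differential expression this is immediate from \eqref{eq:potential}, which rewrites as $V(\alpha_g\omega)_e = V(\omega)_{e\circ g^{-1}}$, so that conjugating by $U_g$ turns the coefficient $V(\omega)_{e\circ g^{-1}}$ into $V(\alpha_g\omega)_e$. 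The more delicate point is equality of domains, i.e.\ that $U_g$ maps $D(H_\omega)$ onto $D(H_{\alpha_g\omega})$. This follows because, up to the canonical identification $\calE_v\to\calE_{vg}$, $(e,j)\mapsto(e\circ g^{-1},j)$ coming from the regularity of the Cayley graph, the trace maps intertwine with the representation, $\trace_v(U_gf)=\trace_{vg}f$ and $\strace_v((U_gf)')=\strace_{vg}f'$; combining this with the covariance of the boundary data \eqref{eq:bcond}, $U(\alpha_g\omega)_v=U(\omega)_{vg}$, one sees that $(\trace_v(U_gf),\strace_v((U_gf)'))\in U(\alpha_g\omega)_v$ holds for all $v$ exactly when $(\trace_v f,\strace_v f')\in U(\omega)_v$ does.

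Next I would record measurability. Because $\calB$ and $\calU$ are finite, the assignment $\omega\mapsto H_\omega$ is measurable in the strong resolvent sense; concretely, using the uniform lower bound $H_\omega+C\geq 0$, the bounded resolvents $(H_\omega+C+1)^{-1}$ depend weakly measurably on $\omega$, which is the input required by the abstract framework. With covariance and measurability in hand, \cite[Theorem 5.1]{lpv2007} applies: for every bounded Borel function $g$ one has $U_g\,g(H_\omega)\,U_g^{-1}=g(H_{\alpha_g\omega})$, so all unitarily invariant spectral quantities become $\alpha$-invariant and hence, by ergodicity, $\P$-almost surely constant. In particular the events $\set{\omega\st \sigma(H_\omega)\cap I\neq\emptyset}$ for rational intervals $I$ are $\alpha$-invariant and thus of measure $0$ or $1$, which yields a deterministic set $\Sigma$; applying the same reasoning to the (covariant) projections onto the pure point, singular continuous and absolutely continuous subspaces produces deterministic $\Sigma_{pp},\Sigma_{sc},\Sigma_{ac}$, and the framework likewise fixes $\Sigma_{disc}$ and $\Sigma_{ess}$. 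Intersecting the full-measure sets obtained from a countable generating family of intervals gives a single $\Omega'$ with $\P(\Omega')=1$ on which all identities hold simultaneously.

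I expect the main obstacle to lie not in the abstract ergodic argument, which is standard once covariance is available, but in the bookkeeping of the second step: ensuring that $U_g$ respects the boundary conditions at every vertex, so that the covariance relation is an identity of unbounded self-adjoint operators, domains included, and not merely of differential expressions. This is precisely where the compatibility \eqref{eq:bcond} between the random boundary conditions and the group action, together with the orientation-preserving nature of $e\mapsto e\circ g$, is indispensable.
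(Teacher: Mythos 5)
Your proposal is correct and follows the same route as the paper: the paper proves this theorem by a single appeal to the general framework of \cite[Theorem 5.1]{lpv2007}, and what you supply is exactly the verification (construction of the permuting unitaries $U_g$, the covariance relation $U_gH_\omega U_g^{-1}=H_{\alpha_g\omega}$ including domains via \eqref{eq:potential} and \eqref{eq:bcond}, and weak measurability of the resolvents) needed to make that citation legitimate. No discrepancy to report.
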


As a consequence, we can relate the measure $\mu$ with the $\P$-a.\,s.\ spectrum $\Sigma$ of $(H_\omega)$, cf.\ \cite{lpv2007, glv2007}.

\begin{corollary}
\label{cor:topological_spt}
	$\Sigma$ is the topological support of $\mu$.
\end{corollary}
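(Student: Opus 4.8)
The plan is to prove the two inclusions $\spt\mu\subseteq\Sigma$ and $\Sigma\subseteq\spt\mu$ separately, using the Pastur--Shubin formula \eqref{eq:IDS} as the main tool. The crucial preliminary step is to upgrade \eqref{eq:IDS} from half-lines to arbitrary Borel sets while retaining independence of the finite set $Q$. Concretely, for each finite $Q$ the assignment $I\mapsto \frac{1}{\abs{\calE_Q}}\int_\Omega \Tr(\1_{\calE_Q}\1_I(H_\omega))\,d\P(\omega)$ defines a Borel measure on $\R$, and by \eqref{eq:IDS} it agrees with $\mu$ on every half-line $(-\infty,\lambda]$; hence it equals $\mu$ on all Borel sets. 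I would record the resulting identity
\[
\mu(I)=\frac{1}{\abs{\calE_Q}}\int_\Omega \Tr\left(\1_{\calE_Q}\1_I(H_\omega)\right)\,d\P(\omega)
\]
for every Borel set $I$ and every finite $Q$ as the starting point.

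For $\spt\mu\subseteq\Sigma$ the argument is short. Let $\lambda\notin\Sigma$. Since $\Sigma$ is the spectrum of a selfadjoint operator it is closed, so there is an open interval $I\ni\lambda$ with $I\cap\Sigma=\emptyset$. For $\P$-a.a.\ $\omega$ we have $\sigma(H_\omega)=\Sigma$, whence $\1_I(H_\omega)=0$; inserting this into the displayed identity gives $\mu(I)=0$, so $\lambda\notin\spt\mu$.

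For $\Sigma\subseteq\spt\mu$, let $\lambda\in\Sigma$ and let $I$ be an arbitrary open interval containing $\lambda$; I argue by contradiction and assume $\mu(I)=0$. Fix the distinguished set $Q$ that symmetrically tiles $\calG$ with grid $T$. For each $g\in T$ the translate $Qg$ is finite, so applying the identity with $Qg$ in place of $Q$ and using $\mu(I)=0$ gives $\int_\Omega \Tr(\1_{\calE_{Qg}}\1_I(H_\omega))\,d\P=0$. As the integrand is nonnegative it vanishes for $\P$-a.a.\ $\omega$; since $\1_{\calE_{Qg}}$ and $\1_I(H_\omega)$ are orthogonal projections and $\Tr(\1_{\calE_{Qg}}\1_I(H_\omega))=\norm{\1_I(H_\omega)\1_{\calE_{Qg}}}_{\mathrm{HS}}^2$, this forces $\1_I(H_\omega)\1_{\calE_{Qg}}=0$ almost surely. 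Because $T$ is countable, a single full-measure set of $\omega$ serves all $g\in T$ simultaneously. Finally, since the translates $Qg$, $g\in T$, partition $\calG=\calV$ and each edge has a unique starting vertex, the projections $\1_{\calE_{Qg}}$ partition the identity on $\cHg$; summing $\1_I(H_\omega)\1_{\calE_{Qg}}=0$ over $g\in T$ yields $\1_I(H_\omega)=0$ for a.a.\ $\omega$. Hence $I\cap\Sigma=\emptyset$, contradicting $\lambda\in I\cap\Sigma$. Thus $\mu(I)>0$ for every open $I\ni\lambda$, i.e.\ $\lambda\in\spt\mu$.

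The main obstacle is the second inclusion. The delicate points are (i) passing from the vanishing of the \emph{localized} trace to the vanishing of the \emph{full} spectral projection, which relies on the symmetric tiling to reconstruct $\1$ from the pieces $\1_{\calE_{Qg}}$ and on the countability of $T$ to avoid assembling uncountably many null sets; and (ii) justifying that $\Tr(\1_{\calE_{Qg}}\1_I(H_\omega))=0$ genuinely entails $\1_I(H_\omega)\1_{\calE_{Qg}}=0$, which uses that the trace of a product of two orthogonal projections equals the squared Hilbert--Schmidt norm of their product, together with the local trace-class property already implicit in \eqref{eq:IDS}.
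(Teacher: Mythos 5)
Your argument is correct. Note, however, that the paper itself offers no proof of Corollary \ref{cor:topological_spt}; it delegates it to \cite{lpv2007, glv2007}, where the statement is obtained from the general framework of von Neumann algebras of random operators: there the key point is that the Pastur--Shubin expression defines a \emph{faithful} normal trace, so that $\mu(I)=0$ for an open $I$ forces $\1_I(H_\omega)=0$ almost surely. What you have written is a self-contained, elementary rendering of exactly that faithfulness argument: the identity $\Tr\left(\1_{\calE_Q}\1_I(H_\omega)\right)=\norm{\1_I(H_\omega)\1_{\calE_Q}}_{\mathrm{HS}}^2\ge 0$, combined with the observation that the projections $\1_{\calE_{Qg}}$ sum strongly to the identity, is precisely what makes the trace faithful, and the first inclusion is the normality/locality half. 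Two small remarks. First, the symmetric tiling is more than you need: since every edge has a unique starting vertex, the sets $\calE_{\set{x},0}$, $x\in\calG$, already form a countable partition of $\calE$, so you could run the second inclusion with $Q=\set{x}$ and avoid invoking the tiling hypothesis altogether. Second, your preliminary upgrade of \eqref{eq:IDS} to Borel sets deserves one explicit line: countable additivity of $I\mapsto\int_\Omega\Tr\left(\1_{\calE_Q}\1_I(H_\omega)\1_{\calE_Q}\right)d\P(\omega)$ follows from Tonelli/monotone convergence because all summands are nonnegative, and the uniqueness theorem applies because both measures are finite on the half-lines $(-\infty,n]$ (the trace-class property from the appendix extends to bounded Borel $I$ by domination). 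With these points made explicit the proof is complete and arguably more transparent than the cited general machinery.
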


Denote by
\[D := \set{f\in \bigoplus_{v\in \calV} \bigoplus_{s\in\calS} L^2(0,l_s) \st \exists \, \mbox{$\calE'\subseteq \calE$ finite}: f_e = 0 \quad(e\in\calE\setminus\calE')}\]
the set of compactly supported $L^2$-functions on $\Gamma$.

\begin{corollary}
\label{cor:compactly_spt_ef}
	Let 
	\[\Sigma_{\rm comp} := \set{\lambda\in \R \st \mbox{for $\P$-a.\,a.\ 
$\omega\in\Omega\,\exists\,f_\omega\in D(H_\omega)\cap D$}: H_\omega 
f_\omega = \lambda f_\omega}.\]
	Then 
	\[\Sigma_{\rm comp} = \set{\lambda\in\R \st \mu(\set{\lambda})>0}.\]
\end{corollary}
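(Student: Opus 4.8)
The plan is to establish the equality $\Sigma_{comp} = \set{\lambda \in \R \st \mu(\set{\lambda}) > 0}$ by showing that a jump of the IDS at $\lambda$ occurs precisely when the random operators $H_\omega$ almost surely admit an eigenfunction with compact support at the energy $\lambda$. The central tool is the Pastur--Shubin formula \eqref{eq:IDS}, which identifies the measure $\mu$ with the trace-per-unit-volume of the spectral projections. Concretely, the jump $\mu(\set{\lambda}) = N(\lambda) - N(\lambda^-)$ equals $\frac{1}{\abs{\calE_Q}}\int_\Omega \Tr\bigl(\1_{\calE_Q}\,\1_{\set{\lambda}}(H_\omega)\bigr)\,d\P(\omega)$ for any finite $Q$, where $\1_{\set{\lambda}}(H_\omega)$ is the orthogonal projection onto the $\lambda$-eigenspace of $H_\omega$. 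So everything reduces to analysing this projection-per-volume and relating it to the presence of compactly supported eigenfunctions.

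First I would treat the implication $\Sigma_{comp} \subseteq \set{\mu(\set{\lambda})>0}$. If $\lambda \in \Sigma_{comp}$, then for $\P$-a.\,a.\ $\omega$ there is a nonzero $f_\omega \in D(H_\omega) \cap D$ with $H_\omega f_\omega = \lambda f_\omega$. Using the $\calG$-equivariance of the family $(H_\omega)$ built into \eqref{eq:potential} and \eqref{eq:bcond}, translating $f_\omega$ by group elements produces eigenfunctions for the translated operators; since a compactly supported eigenfunction can be shifted to have support meeting $\calE_Q$ for a positive-measure set of $\omega$, the integrand $\Tr(\1_{\calE_Q}\,\1_{\set{\lambda}}(H_\omega))$ is strictly positive on a set of positive probability, forcing $\mu(\set{\lambda}) > 0$. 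The quantitative version of this is an ergodic-theoretic density argument: the expected number of linearly independent compactly supported eigenfunctions per unit volume is positive.

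For the converse $\set{\mu(\set{\lambda})>0} \subseteq \Sigma_{comp}$, suppose $\mu(\set{\lambda}) > 0$, so the $\P$-averaged trace of $\1_{\set{\lambda}}(H_\omega)$ restricted to $\calE_Q$ is positive, hence for a positive-measure set of $\omega$ the operator $H_\omega$ has a genuine $\lambda$-eigenfunction $g_\omega$. The task is to upgrade such an $L^2$-eigenfunction to a \emph{compactly supported} one. Here I would exploit the local structure of the metric graph operator: on each edge, a solution of $-g_e'' + V_e g_e = \lambda g_e$ is determined by finitely many boundary data, so the $\lambda$-eigenspace is governed by a transfer-matrix/local dependence that, together with the finiteness of the data sets $\calB$ and $\calU$, allows truncation. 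One constructs a compactly supported eigenfunction by restricting $g_\omega$ to a large subgraph and correcting the boundary, using that the Dirichlet boundary conditions imposed on $\calV_Q^\partial$ are compatible with vanishing data; the positivity of the per-volume trace guarantees that this truncation can be performed without destroying the eigenfunction for almost every $\omega$ in an invariant set of full measure.

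The main obstacle will be the converse direction, specifically producing a \emph{compactly supported} eigenfunction from an a priori merely $L^2$ one. An $L^2$-eigenfunction need not vanish outside a finite set of edges, and one cannot in general truncate an eigenfunction of a Schr\"odinger-type operator while preserving the eigenvalue equation. The resolution must use the positivity of $\mu(\set{\lambda})$ in an essential, quantitative way: the jump measures a positive density of eigenvalue $\lambda$, and by an ergodic/dimension-counting argument (analogous to the Kirsch--Martinelli or Craig--Simon techniques underlying such IDS discontinuity results, and mirroring the approach for combinatorial graphs in \cite{lmv2008,lsv2011}) a positive density of eigenvalues at a fixed energy can only be sustained by eigenfunctions that are themselves finitely supported, since infinitely extended eigenfunctions would have vanishing per-volume multiplicity. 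Making this density-versus-support dichotomy precise in the metric-graph setting, and verifying that the Dirichlet decoupling at boundary vertices yields honest elements of $D(H_\omega) \cap D$, is where the real work lies.
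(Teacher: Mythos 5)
Your first inclusion, $\Sigma_{comp}\subseteq\set{\lambda\st\mu(\set{\lambda})>0}$, is essentially right: writing $\mu(\set{\lambda})=\abs{\calE_Q}^{-1}\int_\Omega\Tr\bigl(\1_{\calE_Q}\1_{\set{\lambda}}(H_\omega)\bigr)\,d\P$ (which follows from \eqref{eq:IDS} by monotone convergence as $\lambda-\eps\uparrow\lambda$) and using the covariance \eqref{eq:potential}, \eqref{eq:bcond} to translate a compactly supported eigenfunction until its support meets $\calE_Q$ does yield positivity of the integrand on a set of positive measure. This matches the argument of \cite[Corollary 7]{glv2007}, to which the paper defers.

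The converse inclusion is where you have a genuine gap, and you have in fact diagnosed it yourself without repairing it. Starting from $\mu(\set{\lambda})>0$ you extract an $L^2$-eigenfunction $g_\omega$ of the infinite-volume operator and propose to truncate it; you then correctly observe that such a truncation cannot preserve the eigenvalue equation, and you fall back on the heuristic that ``infinitely extended eigenfunctions would have vanishing per-volume multiplicity.'' That sentence is not an argument, and the infinite-volume eigenfunction is the wrong object to work with. The actual mechanism --- and the reason the paper goes to the trouble of proving \emph{uniform} rather than pointwise convergence in Theorem \ref{thm:main} --- is the following. Uniform convergence $\norm{N_\omega^{Q_l}-N}_\infty\to 0$ transfers the jump of $N$ at $\lambda$ to the finite-volume counting functions: for $l$ large, $\dim\ker(H_\omega^{Q_l}-\lambda)\geq\tfrac12\,\mu(\set{\lambda})\,\abs{\calE_{Q_l}}$, which grows like $\abs{Q_l}$. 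An eigenfunction of $H_\omega^{Q_l}$ already satisfies Dirichlet conditions at $\calV_{Q_l}^\partial$; its extension by zero lies in $D(H_\omega)$ and satisfies $H_\omega f=\lambda f$ provided in addition $\strace_v f'=0$ at every $v\in\calV_{Q_l}^\partial$ (since $(0,0)\in U_v$ for any local boundary condition). These are at most $2\abs{\calS}\,\abs{\calV_{Q_l}^\partial}\leq 2\abs{\calS}\,\abs{\partial^1 Q_l}=o(\abs{Q_l})$ linear constraints, so by dimension counting they cannot annihilate the eigenspace for $l$ large; any surviving nonzero vector extends by zero to an element of $D(H_\omega)\cap D$. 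Ergodicity (the event is $\alpha$-invariant) upgrades this to $\P$-a.a.\ $\omega$. Without identifying the finite-volume eigenspace as the source of the compactly supported eigenfunction and the role of uniform convergence in bounding its dimension from below, your converse direction does not close.
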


\begin{remark}
	(a) The set $\set{\lambda\in\R \st \mu(\set{\lambda})>0}$ is the set of atoms of $\mu$ and equals the set of discontinuities of the IDS.

	(b) The proof of Corollary \ref{cor:compactly_spt_ef} follows the lines of \cite[Proof of Corollary 7]{glv2007}, applying Theorem \ref{thm:main}.
\end{remark}

\section{Convergence of spectral shift functions}
\label{sec:ergodic}

The next aim is the application of a Banach-space valued ergodic theorem given in \cite{lsv2011}. Therefore it is necessary to prove certain properties of the spectral shift functions.
Before this we introduce the notion concerning the colouring of the Cayley graph $\Gamma=(\calV,\calE,\gamma)$ associated to a given group $\calG$ with finite set of generators $\calS$.

Let $\calZ$ be an arbitrary finite set. A map $\calC: \calV \to \calZ$ is called a \emph{colouring} 
of $\Gamma$ and a map $P:D(P)\to \calZ$, where $D(P)\in \calF$, a \emph{pattern}. Note that, as before, 
$\calF$ denotes the set of all finite subsets of $\calG$.
We write $\calP$ for the set of all patterns and for given $Q\in \calF$ we define the set 
$\calP(Q):= \{P\in\calP \st D(P)=Q \}$.
Given a pattern $P$ and a set $Q\subseteq D(P)$ the \emph{restriction of $P$ on $Q$} is the map 
$P|_Q: Q\to \calZ$ with $P|_Q(g)=P(g)$ for all $g\in Q$. Equivalently, the \emph{restriction of a colouring} $\calC$ to a finite set 
$Q\in\calF$ is given by $\calC|_Q:Q\to \calZ, \calC|_Q(g)=\calC(g)$ for all $ g\in Q$. For $P\in \calP$ and $x\in \calG$ 
the \emph{translation of $P$ by $x$} is defined by $Px:D(P)x\to \calZ$, $(Px)(g)=P(gx^{-1})$. 
We say that two patterns $P,P'\in\calP$ are \emph{equivalent} (and write $P\sim P'$) if there exists $x\in \calG$ 
with $D(P)x=D(P')$ and $(Px)(g)=P'(g)$ for all $g\in D(P')$. The induced quotient set is denoted by $\tilde\calP$
and the equivalence class for given $P\in\calP$ by $\tilde P\in \tilde \calP$. 
For given patterns $P_1,P_2\in \calP$ we set $\sharp_{P_1}({P_2})$ to be the number of occurrences of $P_1$ in $P_2$, i.\,e.\
\[
 \sharp_{P_1}({P_2}):= \vert\{ P \in \calP \st P\sim P_1, D(P)\subseteq D(P_2)  \}\vert.
\]
\begin{definition}
 A function $b:\calF\to [0,\infty)$ is called \emph{boundary term} if 
\begin{itemize}
 \item $b(Q)=b(Qx)$ for all $Q\in \calF$ and $x\in \calG$,
 \item $\lim_{j\to\infty}|Q_j|^{-1}b(Q_j)=0$ for any F\o{}lner sequence $(Q_j)$,
 \item $\exists C>0$ such that $|Q|^{-1}b(Q)\leq C$ for all $Q\in \calF$ and
 \item one has for all $Q,Q'\in \calF$
\[
 b(Q\cap Q')\leq b(Q)+b(Q'),\; b(Q\cup Q')\leq b(Q)+b(Q'),\; b(Q\setminus Q') \leq b(Q)+b(Q').
\]
\end{itemize}
\end{definition}
\begin{definition}
 Let $(X,\Vert\cdot\Vert)$ be a Banach space and a function $ F: \calF \to X$ be given. $F$ is called 
\begin{enumerate}
 \item \emph{almost additive} if there exists a boundary term $b:\calF\to[0,\infty)$ such that for any pairwise disjoint 
  subsets $Q_j$, $j=1,\dots,k$
 \[
   \left\Vert F(Q) - \sum_{j=1}^k  F( Q_j) \right\Vert \leq \sum_{j=1}^k b( Q_j)
 \]
holds, where $Q=\bigcup_{j=1}^k Q_j$;
 \item \emph{$\calC$-invariant} if $F(Q)=F(Qx)$ for all $x\in \calG$ and all $Q\in\calF$ with $\calC|_Q\sim \calC|_{Qx}$.
\end{enumerate}
\end{definition}
For a given almost additive and $\calC$-invariant function $F:\calF\to X$ we define a function $\tilde F:\tilde\calP\to X$ by setting
\[
 \tilde F(\tilde P)=\begin{cases}
                     F(Q) &\mbox{if } \exists\ Q\in \calF \mbox{ such that } \tilde {\calC|_Q} = \tilde P, \\0 & \mbox{otherwise.}
                    \end{cases}
\]
Note that this is well-defined by $\calC$-invariance of $F$.
\begin{theorem}\label{theorem_erg}
 Let $\calG$ be an amenable group generated by a finite set $\calS$, $\Gamma=(\calV,\calE,\gamma)$ the associated Cayley graph, 
 $\calZ$ a finite set and $\calC: \calV\to \calZ$ be an arbitrary colouring. 
 Let $(Q_j)$ be a F\o{}lner sequence in $\calG$ and assume that the frequencies $\nu_P:=\lim_{j\to\infty}|Q_j|^{-1}\sharp_P(\calC|_{Q_j})$ 
 exist for all patterns $P\in \calP$.
Furthermore let $(X,\Vert\cdot\Vert)$ be a Banach-space and $F:\calF\to X$ an almost additive and $\calC$-invariant mapping. Then the limit
\[
 \lim_{j\to\infty}\frac{F(Q_j)}{|Q_j|} 
\]
 exists in the topology of $X$.
\end{theorem}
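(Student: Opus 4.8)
\emph{Setup and strategy.} The plan is to introduce the two normalised quantities
$a_j := F(Q_j)/\abs{Q_j}$ and $R_m := \abs{Q_m}^{-1}\sum_{P\in\calP(Q_m)}\nu_P\,\tilde F(\tilde P)$, and to prove that for each fixed $m$ the sequence $(a_j)_j$ is eventually within $\abs{Q_m}^{-1}b(Q_m)$ of $R_m$. Since $\abs{Q_m}^{-1}b(Q_m)\to 0$ by the definition of a boundary term, this single estimate will simultaneously force convergence of both $(a_j)_j$ and $(R_m)_m$ and equality of their limits, via a routine Cauchy argument. Before anything else I would record a linear bound $\norm{F(Q)}\le C\abs{Q}$: decomposing $Q$ into singletons and using almost additivity gives $\norm{F(Q)}\le\sum_{g\in Q}\norm{F(\set{g})}+\abs{Q}\,b(\set{\id})$, and since $\calA$ is finite there are only finitely many equivalence classes of one-point patterns, so $\calC$-invariance bounds $\norm{F(\set{g})}$ uniformly. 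This is what will let me discard boundary pieces later.

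\emph{The easy half: the right-hand side as a sliding-window limit.} Counting occurrences yields the exact identity
\[
\sum_{x\in\calG\,:\,Q_m x\subseteq Q_j} F(Q_m x) = \sum_{P\in\calP(Q_m)}\sharp_P(\calC|_{Q_j})\,\tilde F(\tilde P),
\]
since every translate $Q_m x\subseteq Q_j$ carries a pattern $\calC|_{Q_m x}$ whose class determines $F(Q_m x)=\tilde F(\tilde P)$ by $\calC$-invariance. As $\calA$ and $Q_m$ are finite, $\calP(Q_m)$ is a \emph{finite} set, so dividing by $\abs{Q_j}$ and letting $j\to\infty$ the right-hand side converges termwise to $\sum_{P}\nu_P\,\tilde F(\tilde P)=\abs{Q_m}R_m$. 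Hence the sliding-window average $S_{m,j}:=\abs{Q_m}^{-1}\abs{Q_j}^{-1}\sum_{x\,:\,Q_m x\subseteq Q_j}F(Q_m x)$ satisfies $S_{m,j}\to R_m$ as $j\to\infty$, for each fixed $m$.

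\emph{The hard half: comparing $a_j$ with the sliding-window average.} The main work, and the step I expect to be the genuine obstacle, is the bound $\limsup_{j\to\infty}\norm{a_j - S_{m,j}}\le \abs{Q_m}^{-1}b(Q_m)$. For a single tiling this is short: using that $Q_m$ symmetrically tiles $\calG$ with some grid $T$, let $I_{m,j}:=\set{t\in T\st Q_m t\subseteq Q_j}$ collect the interior tiles and write $Q_j=\bigl(\bigsqcup_{t\in I_{m,j}}Q_m t\bigr)\cup B_{m,j}$. Almost additivity together with translation invariance of $b$ gives a defect $\le\abs{I_{m,j}}\,b(Q_m)\le\abs{Q_j}\abs{Q_m}^{-1}b(Q_m)$, while the linear bound controls the remainder by $\norm{F(B_{m,j})}\le C\abs{B_{m,j}}$; since $\abs{B_{m,j}}\le\abs{\partial^{R}Q_j}$ with $R\sim\diam Q_m$, Remark \ref{remark_Folner}(a) makes the remainder contribution a quantity $\eta_{m,j}\to0$ as $j\to\infty$, so $\norm{a_j-\abs{Q_j}^{-1}\sum_{t\in I_{m,j}}F(Q_m t)}\le\abs{Q_m}^{-1}b(Q_m)+\eta_{m,j}$. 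The difficulty is that the grid sum samples only positions in $T$, whose pattern statistics need not match the frequencies $\nu_P$, which are defined through \emph{all} occurrences; indeed, a periodically biased colouring shows grid frequencies can genuinely differ. I would remove this bias by averaging the tiling estimate over the $\abs{Q_m}$ translated copies of the grid, using the symmetry $T=T^{-1}$ and the tiling identity $\calG=\bigsqcup_{t\in T}Q_m t$ to organise the shifted grids so that each admissible window $Q_m x\subseteq Q_j$ is counted exactly once; this converts the averaged grid sums into $S_{m,j}$ while keeping the per-tiling error at $\abs{Q_m}^{-1}b(Q_m)$, the temperedness of $(Q_n)$ serving to control overlaps and boundary contributions uniformly in $j$. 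This bookkeeping for a grid that is merely symmetric rather than a subgroup is where the real care is needed.

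\emph{Conclusion.} Combining the two halves, for every $m$ we obtain $\limsup_{j}\norm{a_j-R_m}\le\abs{Q_m}^{-1}b(Q_m)$. The triangle inequality then gives $\norm{R_m-R_{m'}}\le\abs{Q_m}^{-1}b(Q_m)+\abs{Q_{m'}}^{-1}b(Q_{m'})$, so $(R_m)_m$ is Cauchy with some limit $\bar F\in X$, and $\limsup_{j}\norm{a_j-\bar F}\le\abs{Q_m}^{-1}b(Q_m)+\norm{R_m-\bar F}\to0$ as $m\to\infty$. Thus $a_j\to\bar F$ and $R_m\to\bar F$, which is exactly the asserted existence and equality of the two limits.
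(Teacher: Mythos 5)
The paper offers no proof of Theorem \ref{theorem_erg}; it is quoted from \cite{lsv2011}, and your two-step scheme (compare $F(Q_j)/\abs{Q_j}$ with the sliding-window average $S_{m,j}$ via tilings, then identify $S_{m,j}$ with the frequency sum) is in substance the argument used there, so the approach is the right one. The single step you assert rather than prove --- that the $\abs{Q_m}$ shifted grids can be organised so that each window $Q_m x\subseteq Q_j$ is counted exactly once --- is exactly where the hypothesis $T=T^{-1}$ enters, and it does go through in one line: applying the unique decomposition $\calG=\bigcup_{t\in T}Q_m t$ (disjointly) to $x^{-1}$ shows that every $x\in\calG$ is uniquely $x=tq$ with $t\in T=T^{-1}$ and $q\in Q_m^{-1}$, i.e.\ $\calG=\bigcup_{q\in Q_m^{-1}}Tq$ disjointly; summing your per-grid estimate over the $\abs{Q_m}$ grids $Tq$ and dividing by $\abs{Q_m}\,\abs{Q_j}$ gives $\limsup_j\norm{a_j-S_{m,j}}\le\abs{Q_m}^{-1}b(Q_m)$ as claimed. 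Two minor points: temperedness of $(Q_n)$ plays no role in this theorem (it is needed only for the Lindenstrauss argument that produces the frequencies, whose existence is here assumed); and in the counting identity of your ``easy half'' one should note the harmless multiplicity coming from a possible nontrivial right-stabiliser of $Q_m$ (the same factor appears on both sides, so the limit is unaffected). Otherwise the linear bound $\norm{F(Q)}\le C\abs{Q}$, the estimate $B_{m,j}\subseteq\partial^{R}Q_j$ with $R=\diam Q_m$, and the closing Cauchy argument are all correct.
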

See \cite{ps2012} for a proof of Theorem \ref{theorem_erg} and \cite{lsv2011, lmv2008} for earlier versions. 
Now we show that in our situation the assumptions of the theorem are fulfilled almost surely. 
For $\omega\in\Omega$ define the map $\calC_\omega:\calV\to \calZ$ by
\begin{equation}\label{def_cC_omega}
 \calC_\omega(v) := \left( (V(\omega)_{e})_{e\in\calE_{v,0}} ,U(\omega)_v \right)
\quad\mbox{where}\quad 
\calZ:=(\oplus_{s\in \calS}\calB_s)\times \calU .
\end{equation}
To show the existence of the frequencies $\nu_P$ we need the following theorem, which is a special case of the Lindenstrauss' pointwise ergodic theorem in \cite{Lindenstrauss-01}.
\begin{theorem}\label{theorem_linde}
  Let $\calG$ act from the left on a measure space $(\Omega,\calA,\P)$ by an ergodic and measure preserving transformation $\alpha$ and let $(Q_j)$ be a tempered F\o{}lner sequence. Then for any $ f\in L^1(\P)$
  \[
  \lim_{j\rightarrow\infty}\frac{1}{\vert Q_j \vert}\sum_{g\in Q_j}f(\alpha_g (\omega) ) =\int_\Omega f(\omega) d \P(\omega)
  \]
  holds for $\P$-a.\,a.\ $\omega\in\Omega$.
\end{theorem}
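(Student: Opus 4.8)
The plan is to follow the classical two-step scheme for pointwise ergodic theorems: first establish a weak-type $(1,1)$ maximal inequality for the averaging operators, then prove pointwise convergence on a dense subclass of $L^1(\P)$, and finally combine the two via the Banach principle. Throughout, write $A_j f(\omega) := \abs{Q_j}^{-1}\sum_{g\in Q_j} f(\alpha_g(\omega))$ for the averages and
\[
 (Mf)(\omega) := \sup_{j\in\N}\, \abs{Q_j}^{-1}\sum_{g\in Q_j} \abs{f(\alpha_g(\omega))}
\]
for the associated maximal operator. Since $\alpha$ is ergodic, the only $\alpha$-invariant $L^1$-functions are the constants, so the candidate limit is forced to be $\int_\Omega f\,d\P$; it therefore suffices to prove that $A_j f$ converges $\P$-almost everywhere to \emph{some} invariant function, whose value is then pinned down below.

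First I would reduce the almost-everywhere statement to pointwise convergence on a conveniently chosen dense set. By the mean ergodic theorem for amenable groups, $A_j f \to \int_\Omega f\,d\P$ in $L^1(\P)$ for every $f\in L^1(\P)$, and the closed linear span of the invariant functions together with the increments $f - f\circ\alpha_h$ (with $h\in\calG$, $f\in L^\infty(\P)$) is dense in $L^1(\P)$. On constants the averages are stationary, and on the increments the quantity $A_j(f - f\circ\alpha_h)$ telescopes into a contribution supported on the symmetric difference of $Q_j$ and its translate, so that $\abs{A_j(f-f\circ\alpha_h)(\omega)} \le \norm{f}_\infty\,\abs{Q_j}^{-1}\abs{Q_j \triangle Q_j h}$, which tends to $0$ uniformly in $\omega$ by the F\o lner property. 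This yields pointwise convergence (in fact to $\int_\Omega f\,d\P$) on the dense set.

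The decisive step is the maximal inequality $\P(\set{Mf > \lambda}) \le C\lambda^{-1}\norm{f}_1$, with $C$ depending only on the temperedness constant. Here I would transfer the estimate from the dynamical system to the group itself, so that bounding $M$ becomes a Vitali-type covering estimate for the family $\set{Q_j}$ acting by left translation. For a general F\o lner sequence no deterministic covering lemma is available, and this is exactly where the tempered (Shulman) condition $\abs{\bigcup_{k<n} Q_k^{-1}Q_n} \le C\abs{Q_n}$ enters. Following Lindenstrauss, I would extract a covering by a \emph{random} greedy selection—choosing translates in order of decreasing scale subject to independent random acceptance thresholds—and estimate the expected multiplicity of the resulting overlaps; the tempered condition is precisely what keeps this expected overlap uniformly bounded, which after transference back to $\Omega$ delivers the weak-$(1,1)$ bound.

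Finally I would combine the two ingredients. The maximal inequality shows that the set of $f\in L^1(\P)$ for which $A_j f$ converges $\P$-a.e. is closed: if $f_n \to f$ in $L^1(\P)$ and each $A_j f_n$ converges a.e., then $\limsup_j A_j f - \liminf_j A_j f \le 2\,M(f - f_n)$ pointwise, and the right-hand side is small off a set of small measure by the maximal bound, so the oscillation vanishes a.e. As this closed set contains the dense subclass from the second step, it is all of $L^1(\P)$, and the limit is identified with $\int_\Omega f\,d\P$ by matching it against the $L^1$-limit from the mean ergodic theorem along a subsequence. The main obstacle is unquestionably the maximal inequality: the random covering argument is the genuinely hard and novel part, whereas the transference and the Banach-principle closure are routine once the weak-$(1,1)$ bound is in hand.
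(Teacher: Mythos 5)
The paper does not prove this theorem at all: it is stated as a special case of Lindenstrauss' pointwise ergodic theorem and simply cited to \cite{Lindenstrauss-01}. So there is no internal proof to compare against; the relevant benchmark is Lindenstrauss' own argument. Your outline reproduces its architecture faithfully: identification of the limit via ergodicity, pointwise convergence on the dense subclass spanned by constants and coboundaries $f-f\circ\alpha_h$ (where the F\o lner property kills the telescoped boundary contribution), a weak-type $(1,1)$ maximal inequality, and the Banach-principle closure. You also correctly locate where temperedness is indispensable, namely in the covering estimate behind the maximal inequality.

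However, as a proof the proposal has a genuine gap exactly where you yourself flag it: the maximal inequality is asserted, not proved. The sentence describing a ``random greedy selection with independent acceptance thresholds'' whose ``expected multiplicity of overlaps'' is controlled by the Shulman condition is a description of the shape of Lindenstrauss' covering lemma, not an argument; none of the quantitative content (how the random selection is constructed scale by scale, why the expected density of the selected translates covers a fixed fraction of the mass, how the tempered constant enters the overlap bound, and how the transference principle converts the group-side estimate into $\P(\{Mf>\lambda\})\le C\lambda^{-1}\|f\|_1$) is supplied. Since every other step of the scheme is routine and the theorem is false for general (non-tempered) F\o lner sequences, this omitted step is the entire mathematical substance of the result. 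A minor additional point to check: with the left action $\alpha_g\alpha_h=\alpha_{gh}$ the telescoped error in the coboundary step is supported on $Q_j\,\triangle\, hQ_j$, a left translate, which is what the paper's left F\o lner condition $|\calS Q_n\setminus Q_n|/|Q_n|\to0$ controls; your $Q_j\,\triangle\, Q_jh$ is the wrong side under that convention.
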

\begin{lemma}
 Let $(Q_j)$ be a tempered F\o lner sequence and $\calC_\omega$ and $\calZ$ be given as in \eqref{def_cC_omega} for all $\omega \in \Omega$. Then there exists a set $\tilde\Omega\subseteq \Omega$ with $\P(\tilde\Omega)=1$, such that for each $P\in\bigcup_{j\in\N}\calP(Q_j)$ and $\omega \in \tilde\Omega$ the limit
\[
 \nu_P = \lim_{j\to\infty}\frac{\sharp_P(\calC_\omega|_{Q_j})}{|Q_j|}
\]
exists and is independent of $\omega\in\tilde\Omega$.
\end{lemma}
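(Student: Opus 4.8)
The plan is to recognise the pattern-counting map $Q_j\mapsto \sharp_P(\calC_\omega|_{Q_j})$ as an ergodic (Birkhoff) average over the group action $\alpha$, so that Lindenstrauss' pointwise ergodic theorem (Theorem \ref{theorem_linde}) supplies both the convergence and the deterministic value of the limit. The mechanism that turns a purely combinatorial count into a genuine ergodic average is the covariance of the random colouring, namely the identity $\calC_{\alpha_x(\omega)}(h)=\calC_\omega(hx)$ for all $h\in\calV$ and $x\in\calG$. First I would establish this relation from the equivariance \eqref{eq:potential} of the potential and \eqref{eq:bcond} of the boundary conditions. The boundary-condition component is immediate from \eqref{eq:bcond}. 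For the potential component one identifies $\calE_{v,0}$ with $\calS$ via the edge $e_s^v$ running from $v$ to $sv$, and checks, using $\gamma(e\circ g)=(\gamma_0(e)g^{-1},\gamma_1(e)g^{-1})$, that the action sends $e_s^v\circ g^{-1}$ to $e_s^{vg}$; combined with \eqref{eq:potential} this yields $V(\alpha_g(\omega))_{e_s^v}=V(\omega)_{e_s^{vg}}$, and hence the claimed covariance after comparison with the definition \eqref{def_cC_omega} of $\calC_\omega$. This index bookkeeping is the conceptual heart of the argument and the step I expect to require the most care.

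Next, for a fixed pattern $P$ with $D:=D(P)$ I would introduce the bounded measurable indicator $f_P(\omega):=\1[\calC_\omega|_D=P]$, which lies in $L^1(\P)$ since $\P$ is a probability measure and $\omega\mapsto\calC_\omega(v)$ is measurable with values in the finite set $\calA$. An occurrence of $P$ in $\calC_\omega|_{Q_j}$ at a translation position $x$ means $\calC_\omega(hx)=P(h)$ for all $h\in D$, which by the covariance relation is exactly $f_P(\alpha_x(\omega))=1$. Thus, up to a fixed combinatorial constant $s_P$ (the number of right translations stabilising the pair $(D,P)$, a finite quantity independent of $\omega$ and $j$, and therefore harmless),
\[
 \sharp_P(\calC_\omega|_{Q_j}) = \frac{1}{s_P}\sum_{x\in\calG,\; Dx\subseteq Q_j} f_P(\alpha_x(\omega)).
\]

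Finally I would pass from the summation set $\{x: Dx\subseteq Q_j\}$ to $Q_j$ itself. After replacing $P$ by an equivalent pattern (which leaves $\sharp_P$ unchanged) we may assume $\id\in D$; then the two index sets differ only within $\partial^R Q_j$ for $R:=\diam D$, and since $0\le f_P\le 1$ the corresponding sums differ by at most $|\partial^R Q_j|$, with $|Q_j|^{-1}|\partial^R Q_j|\to 0$ by Remark \ref{remark_Folner}(a). Hence $|Q_j|^{-1}\sharp_P(\calC_\omega|_{Q_j})$ has the same limit as $s_P^{-1}|Q_j|^{-1}\sum_{x\in Q_j}f_P(\alpha_x(\omega))$, to which Theorem \ref{theorem_linde} applies, giving for $\P$-a.a.\ $\omega$ the deterministic value $\nu_P=s_P^{-1}\int_\Omega f_P\,d\P=s_P^{-1}\,\P(\{\omega:\calC_\omega|_D=P\})$. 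Since $\bigcup_{n}\calP(Q_n)$ is countable, intersecting the countably many full-measure sets yields a single $\tilde\Omega$ with $\P(\tilde\Omega)=1$ on which every $\nu_P$ exists and is independent of $\omega$. The one genuine subtlety, besides the index bookkeeping, is that Theorem \ref{theorem_linde} requires a \emph{tempered} sequence, whereas the statement is phrased for an arbitrary F\o lner sequence; this is reconciled via Remark \ref{remark_Folner}(b) together with the standing assumption, under which the sequence relevant to the application (the symmetrically tiling one) is tempered.
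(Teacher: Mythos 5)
Your argument is correct and follows essentially the same route as the paper: derive the covariance $\calC_{\alpha_g(\omega)}(v)=\calC_\omega(vg)$ from \eqref{eq:potential} and \eqref{eq:bcond}, rewrite the pattern count as a Birkhoff sum of the indicator $f_P$, control the discrepancy between the summation sets via $|\partial^R Q_j|/|Q_j|\to 0$, apply Theorem \ref{theorem_linde}, and intersect countably many full-measure sets. Your two refinements --- the stabiliser constant $s_P$ accounting for translations that fix the pattern, and the explicit remark that Theorem \ref{theorem_linde} needs a tempered sequence --- are sound points the paper passes over silently, but they do not change the substance of the argument.
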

\begin{proof}
 Let $P:Q\to \calZ$ be given for some $Q\in \calF$ with $\diam Q = R$. W.\,l.\,o.\,g.\ we may assume that $\id\in Q$, which is possible since we are interested in counting translates of $P$.
Obviously the following inequalities hold
\[
\sum_{g\in Q_j\setminus \partial^R Q_j} \1_{A(\omega)}(g)
\leq
\sharp_P(\calC_\omega|_{Q_j})
\leq 
\sum_{g\in Q_j} \1_{A(\omega)}(g),
\]
where $A(\omega):= \{g\in \calG\st P(v)=\calC_\omega(vg) \text{ for all }v\in Q\}$.
By the properties of $U$ and $V$ given in \eqref{eq:potential} and \eqref{eq:bcond} we have for given $g,v\in \calG$, $\omega\in \Omega$
\[
 \calC_\omega(vg)=\left( (V(\omega)_e)_{e\in \calE_{vg,0}}, U(\omega)_{vg} \right)
=\left( (V(\alpha_g(\omega))_e)_{e\in \calE_{v,0}}, U(\alpha_g(\omega))_{v} \right)
=\calC_{\alpha_g(\omega)}(v).
\]
Therefore,
\[
 \1_{A(\omega)}(g) = \1_{\{g\in \calG\st P(v)=\calC_{\alpha_g(\omega)}(v) \text{ for all }v\in Q\}}(g) 
= f_P(\alpha_g(\omega)),
\]
where
$$
f_P(\omega)=\begin{cases}
       1 &  \text{if }P(v)=\calC_{\omega}(v) \text{ for all }v \in Q ,     \\ 0 &\mbox{otherwise},
      \end{cases}
$$
and hence
\[
\sum_{g\in Q_j\setminus \partial^R Q_j} f_P(\alpha_g(\omega))
\leq
\sharp_P(\calC_\omega|_{Q_j})
\leq 
\sum_{g\in Q_j} f_P(\alpha_g(\omega)).
\]
This gives
\[
 \limsup\limits_{j\rightarrow \infty} \frac{\sharp_P(\calC_\omega|_{Q_j})}{\vert Q_j \vert}
 \leq
 \limsup\limits_{j\rightarrow \infty} \frac{1}{\vert Q_j\vert} \sum\limits_{g\in Q_{j}}f_P(\alpha_g(\omega))
\]
and
\[
 \liminf\limits_{j\rightarrow \infty} \frac{\sharp_P(\calC_\omega|_{Q_j})}{\vert Q_j \vert}
 \geq
\liminf\limits_{j\rightarrow \infty} \frac{1}{\vert Q_j\vert}\! \sum\limits_{g\in Q_j\setminus \partial^R Q_j}\!\!\! f_P(\alpha_g(\omega))
 =
\liminf\limits_{j\rightarrow \infty} \frac{1}{\vert Q_j\vert} \sum\limits_{ g \in Q_{j}}f_P(\alpha_g(\omega)),
\]
where we used that $(Q_j)$ is a F\o lner sequence, cf.\ Remark \ref{remark_Folner}. As $\alpha$ is an ergodic an measure preserving action Theorem \ref{theorem_linde} yields a set $\Omega_P\subseteq\Omega$ of full measure such that the limits
\[
 \lim\limits_{j\rightarrow \infty} \frac{\sharp_P(\calC_\omega|_{Q_j})}{\vert Q_j \vert}
=\lim\limits_{j\rightarrow \infty} \frac{1}{\vert Q_j\vert} \sum\limits_{ g \in Q_{j}}f_P(\alpha_g(\omega))
= \int_\Omega f_P(\omega) d \P(\omega)
\]
exist and are equal for all $\omega \in \Omega_P$. The desired set $\tilde \Omega$ is the (countable) intersection of these $\Omega_P$ for $P\in\bigcup_{j\in\N}\calP(Q_j)$.
\end{proof}

We now focus on the spectral shift function. Since the operators $H_\omega^Q$ are unbounded, the eigenvalue counting functions $n_\omega^Q$ are unbounded as well. However, the spectral shift function for two realisations $H_1^Q$ and $H_2^Q$ with different boundary conditions is bounded, which will be shown in Lemma \ref{lem:xi_est}.

\begin{definition}
	Let $\mathcal{H}$ be a Hilbert space and $H_1, H_2$ be selfadjoint, lower bounded operators with discrete spectra.
	Then the spectral shift function is defined by
	\[\xi_{H_1,H_2}(\lambda) := n_{H_2}(\lambda) - n_{H_1}(\lambda) \quad(\lambda\in\R).\]
\end{definition}

Thus, to obtain properties of $n_{H_2}$ it suffices to study properties of $n_{H_1}$ and $\xi_{H_1,H_2}$.

\begin{lemma}
\label{lem:xi_est}
	Let $H_0$ be a densely defined, closed symmetric and lower bounded operator with deficiency index $k$. Let $H_1$ and $H_2$ be two selfadjoint extensions of $H_0$ with discrete spectrum. Then
	\[\abs{\xi_{H_1,H_2}} \leq k.\]
\end{lemma}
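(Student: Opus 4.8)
The plan is to establish the two one-sided estimates $n_{H_2}(\lambda) - n_{H_1}(\lambda) \le k$ and $n_{H_1}(\lambda) - n_{H_2}(\lambda) \le k$ for every $\lambda\in\R$; together with the definition of $\xi_{H_1,H_2}$ these immediately give $\abs{\xi_{H_1,H_2}(\lambda)}\le k$. Since the roles of $H_1$ and $H_2$ are interchangeable, it suffices to prove the first inequality and then appeal to symmetry.

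Two structural facts drive the argument. First, von Neumann's extension theory: as $H_0$ is closed, symmetric and lower bounded with deficiency index $k$, each selfadjoint extension $H_i$ has a domain of the form $D(H_i) = D(H_0)\dotplus W_i$ with $\dim W_i = k$, where $W_i = \{g + V_i g : g\in \ker(H_0^*-i)\}$ for the unitary $V_i$ defining the extension. In particular $D(H_0)$ has codimension exactly $k$ in $D(H_i)$, and since $H_i\supseteq H_0$ we have $H_1 u = H_0 u = H_2 u$ for all $u\in D(H_0)$, so the two operators agree on the common core $D(H_0)$. Second, the min-max (Courant--Fischer) characterization of the counting function: for a lower bounded selfadjoint operator $H$ with discrete spectrum, any subspace $M\subseteq D(H)$ with $\dupa{Hu}{u}\le \lambda\norm{u}^2$ for all $u\in M$ satisfies $n_H(\lambda)\ge \dim M$, and the maximal such dimension equals $n_H(\lambda)$.

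With these in hand the estimate is short. Fix $\lambda$ and set $L := \operatorname{Ran}\1_{(-\infty,\lambda]}(H_2)\subseteq D(H_2)$, a finite-dimensional subspace with $\dim L = n_{H_2}(\lambda)$ on which $\dupa{H_2 u}{u}\le \lambda\norm{u}^2$. Because $D(H_0)$ has codimension $k$ in $D(H_2)$, the kernel $M := L\cap D(H_0)$ of the natural map $L\to D(H_2)/D(H_0)$ satisfies $\dim M \ge \dim L - k = n_{H_2}(\lambda) - k$. For $u\in M\subseteq D(H_0)$ we have $H_1 u = H_0 u = H_2 u$, whence $\dupa{H_1 u}{u} = \dupa{H_2 u}{u}\le \lambda\norm{u}^2$. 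Applying the min-max lower bound to $H_1$ and the subspace $M\subseteq D(H_1)$ yields $n_{H_1}(\lambda)\ge \dim M \ge n_{H_2}(\lambda) - k$, which is precisely the one-sided bound, and symmetry finishes the proof.

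The routine points to verify are the codimension statement, a direct reading of the von Neumann decomposition above, and the elementary fact that intersecting $L$ with a subspace of codimension $k$ lowers the dimension by at most $k$. The one genuinely delicate step, and the one I would treat most carefully, is the min-max lower bound with the \emph{non-strict} inequality $\le\lambda$: one must ensure that a subspace on which the Rayleigh quotient is bounded by $\lambda$ is actually counted by $n_H(\lambda)$, which counts eigenvalues $\le\lambda$, so that eigenvalues sitting exactly at the threshold $\lambda$ are treated consistently on both sides of the comparison. Working throughout with operator-domain subspaces (available since the spectra are discrete and the spectral subspaces consist of genuine eigenvectors) avoids any form-domain technicalities.
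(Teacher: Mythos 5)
Your proposal is correct and follows essentially the same route as the paper: the paper's proof is exactly the chain $n_{H_2}(\lambda)\le \max\set{\dim X \mid X\subseteq D(H_0),\ H_2|_X\le\lambda}+k = \max\set{\dim X \mid X\subseteq D(H_0),\ H_1|_X\le\lambda}+k\le n_{H_1}(\lambda)+k$ via the min-max characterization, using that $D(H_0)$ has codimension $k$ in each $D(H_i)$ and that $H_1$ and $H_2$ agree on $D(H_0)$. Your version merely instantiates the first inequality with the concrete spectral subspace $L=\operatorname{Ran}\1_{(-\infty,\lambda]}(H_2)$ intersected with $D(H_0)$, which is the same argument.
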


\begin{proof}
	By the min-max principle, for any selfadjoint operator $H$ we have
	\[n_{H}(\lambda) = \max\set{\dim X \mid X\subseteq D(H) \text{ linear subspace},\, H|_X\leq \lambda} \quad(\lambda\in\R),\]
	cf. \cite{glv2007}.
	Now, for $\lambda\in\R$,
	\begin{align*}
		n_{H_2}(\lambda) & = \max\set{\dim X \mid X\subseteq D(H_2) \text{ linear subspace},\, H_2|_X\leq \lambda} \\
		& \leq \max\set{\dim X \mid X\subseteq D(H_0) \text{ linear subspace},\, H_2|_X\leq \lambda} + k \\
		& = \max\set{\dim X \mid X\subseteq D(H_0) \text{ linear subspace},\, H_1|_X\leq \lambda} + k \\
		& \leq \max\set{\dim X \mid X\subseteq D(H_1) \text{ linear subspace},\, H_1|_X\leq \lambda} + k \\
		& = n_{H_1}(\lambda) + k. \qedhere
	\end{align*}
\end{proof}

Changing the boundary conditions of a selfadjoint operator on a graph at one vertex $v$ yields a perturbation of rank at most $2\abs{\calE_v}$. Hence, the spectral shift function of two selfadjoint operators $H_1$ and $H_2$ on a graph which differ only by the boundary conditions at a finite vertex set $Q$ satisfies
\begin{equation}\label{eq:xi_estimate}
	\abs{\xi_{H_1,H_2}} \leq 2\bigcup_{v\in Q} \abs{\calE_v} = 4\abs{Q}\abs{\calS}.
\end{equation}

%

In section \ref{sec:restriction} we defined the eigenvalue counting function $n_\omega^Q$ for the restriction of the 
operator $H_\omega$ to the subgraph $\Gamma_Q$ generated by the set $Q\in \calF$. 
Similarly, we denote the eigenvalue counting function for the Dirichlet Laplacian $-\Delta_D$ restricted 
to $\Gamma_Q$ by $n_D^Q$. The Dirichlet boundary conditions induce that $n_D^Q$ decom\-poses into a sum of 
counting functions, i.\,e.,
\begin{equation}\label{eq_n_DQ}
 n_D^Q(\lambda)= \sum_{e\in \calE_Q} n_{D,s}(\lambda) = |\calE_Q| n_{D,s}(\lambda) = |Q|\sum_{s\in\calS} n_{D,s}(\lambda),
\end{equation}
where $n_{D,s}$ is the eigenvalue counting function of the Dirichlet Laplacian on $L^2(0,l_s)$. We are interested in the spectral shift function 
\begin{equation}\label{def_xi}
 \xi_\omega^Q(\lambda):=n_\omega^Q(\lambda)- n_D^Q(\lambda) = |Q||\calS|\left(N_\omega^Q(\lambda)-\frac{1}{|\calS|}\sum_{s\in\calS}n_{D,s}(\lambda)\right).
\end{equation}
Denote the Banach space of the right-continuous, bounded functions 
$f\from\R\to\R$ equipped with supremum norm by $\calB(\R)$. We study the 
behaviour of the functions $\xi_\omega^{Q_j}$ as $j\to\infty$ as elements of 
$\calB(\R)$, where $(Q_j)$ is a F\o lner sequence.
To this end we prove that $\xi_\omega\from\calF\to\calB(\R)$, 
$Q\mapsto\xi_\omega^Q$ is almost additive and $\calC$-invariant, which will allow for the application of Theorem~\ref{theorem_erg}.
\begin{lemma}
 Let $\omega\in \Omega $ and $\xi_\omega:\calF\to \calB(\R)$, $Q\mapsto\xi_\omega^Q$, where $\xi_\omega^Q$ is given as in \eqref{def_xi}. Then $\xi_\omega$ is almost additive and $\calC_\omega$-invariant.
\end{lemma}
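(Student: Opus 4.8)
For $\xi_\omega^Q=n_\omega^Q-n_D^Q$ the plan is to treat the two claims separately, exploiting throughout that the Dirichlet contribution $n_D^Q=|Q||\calS|\,n_D$ from \eqref{eq_n_DQ} is \emph{exactly} additive in $Q$ and depends only on $|Q|$.

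For almost additivity I start from pairwise disjoint $Q_1,\dots,Q_m$ with union $Q$. Since $\calE_Q=\bigcup_k\calE_{Q_k}$ is a disjoint union, we have $\mathcal H_{\Gamma_Q}=\bigoplus_k\mathcal H_{\Gamma_{Q_k}}$, and both $H_\omega^Q$ and the direct sum $\bigoplus_k H_\omega^{Q_k}$ act on this space with the same differential expression. As $n_D^Q=\sum_k n_D^{Q_k}$, the Dirichlet parts cancel and
\[
\xi_\omega^Q-\sum_k\xi_\omega^{Q_k}=n_\omega^Q-\sum_k n_\omega^{Q_k}=n_{H_\omega^Q}-n_{\bigoplus_k H_\omega^{Q_k}}=\xi_{\bigoplus_k H_\omega^{Q_k},\,H_\omega^Q}.
\]
The two operators carry identical local boundary conditions except at the \emph{interface vertices} $W:=\set{v\in\calV_Q^i \st \text{the edges at } v \text{ lie in more than one piece}}$, where $H_\omega^Q$ imposes $U(\omega)_v$ while the direct sum imposes Dirichlet; one checks that boundary vertices of $Q$ and inner vertices whose edges stay within a single piece contribute no discrepancy. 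Applying the perturbation bound \eqref{eq:xi_estimate} to the vertex set $W$ then yields $\norm{\xi_\omega^Q-\sum_k\xi_\omega^{Q_k}}_\infty\leq 4|\calS|\,|W|$.

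It remains to control $|W|$ by a boundary term. A vertex $v\in W$ lies in some $Q_j$ and, being inner in $Q$, has all neighbours in $Q$ but at least one neighbour $s^{-1}v$ in a different piece; as $d(v,s^{-1}v)=1$ this forces $v\in\partial^1 Q_j$, so $W\subseteq\bigcup_k\partial^1 Q_k$ and $|W|\leq\sum_k|\partial^1 Q_k|$. Thus $b(Q):=4|\calS|\,|\partial^1 Q|$ gives the required estimate $\norm{\xi_\omega^Q-\sum_k\xi_\omega^{Q_k}}_\infty\leq\sum_k b(Q_k)$. That $b$ is a boundary term follows from right-invariance of the metric (hence $b(Qx)=b(Q)$), from Remark \ref{remark_Folner}(a) (hence $|Q_n|^{-1}b(Q_n)\to0$), and from the standard volume bound $|\partial^1 Q|\leq C|Q|$ (hence uniform boundedness of $|Q|^{-1}b(Q)$).

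For $\calC_\omega$-invariance, suppose $\calC_\omega|_Q\sim\calC_\omega|_{Qx}$, realised by some $y$ with $Qy=Qx$; unwinding the definition of pattern equivalence this means $\calC_\omega(w)=\calC_\omega(wy)$ for all $w\in Q$, i.e.\ the potentials on $\calE_{w,0}$ and the datum $U(\omega)_w$ coincide with those at $wy$. Right translation by $y$ is a graph isomorphism $\Gamma_Q\to\Gamma_{Qy}=\Gamma_{Qx}$ preserving the inner/boundary classification, and the matching of $\calC_\omega$ on $Q$ carries the potentials and inner boundary conditions of $H_\omega^Q$ onto those of $H_\omega^{Qx}$, the Dirichlet conditions at boundary vertices matching automatically. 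The induced unitary intertwines $H_\omega^Q$ with $H_\omega^{Qx}$, so $n_\omega^Q=n_\omega^{Qx}$; with $n_D^Q=n_D^{Qx}$ this gives $\xi_\omega^Q=\xi_\omega^{Qx}$. I expect the main obstacle to be the bookkeeping in the almost additivity step — precisely pinning down $W$ and verifying that $H_\omega^Q$ and $\bigoplus_k H_\omega^{Q_k}$ genuinely agree off $W$, so that \eqref{eq:xi_estimate} applies with exactly this vertex set; the remainder is exact cancellation or a routine translation argument.
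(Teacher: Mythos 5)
Your proof is correct and follows essentially the same route as the paper's: cancel the exactly additive Dirichlet term, recognise $n_\omega^Q-\sum_k n_\omega^{Q_k}$ as the spectral shift of two operators differing only by boundary conditions at finitely many vertices, and bound it via \eqref{eq:xi_estimate} by $4\abs{\calS}\sum_k\abs{\partial^1 Q_k}$. Your discrepancy set $W$ is a (slightly sharper) subset of the paper's $\bigcup_k\calV_{Q_k}^\partial$, and your explicit verifications of the $\calC_\omega$-invariance and of the boundary-term properties of $b$ only fill in details the paper leaves to the reader.
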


\begin{proof}
	Let $\omega \in\Omega$ and $Q_j\in \calF$, $j=1,\dots,k$ pairwise disjoint be given and set $Q:=\bigcup_{j=1}^k Q_i$. Then
	\begin{align*}
	\left\Vert \xi_\omega^{Q}-\sum_{j=1}^k \xi_\omega^{Q_j} \right\Vert 
	&=  \left\Vert n_\omega^{Q}-n_D^Q - \sum_{j=1}^k (n_\omega^{Q_j}-n_D^{Q_j})
 \right\Vert  \\
	&\leq    \left\Vert n_\omega^{Q} - \sum_{j=1}^k n_\omega^{Q_j} \right\Vert  
	+  \left\Vert n_D^Q - \sum_{j=1}^k n_D^{Q_j} \right\Vert   
	\end{align*}
	holds, where $\Vert\cdot\Vert$ denotes the supremum norm. By \eqref{eq_n_DQ} we have $n_D^Q = \sum_{j=1}^k n_D^{Q_j} $, therefore it remains to prove almost additivity for $n_\omega\from\calF\to \calB(\R), Q\mapsto n_\omega^Q$. 
	Note that $\sum_{j=1}^k n_\omega^{Q_j}$ is the eigenvalue counting function of the operator $\oplus_{j=1}^k H_\omega^{Q_j}$, which equals $H_\omega^Q$ up to the boundary conditions on the vertices $\bigcup_{j=1}^k\calV^{\partial}_{Q_j}$. Now, \eqref{eq:xi_estimate} yields
	\[
	\left\Vert n_\omega^{Q} - \sum_{j=1}^k n_\omega^{Q_j} \right\Vert 
	\leq  4\abs{\calS}\abs{\bigcup_{j=1}^k\calV^{\partial}_{Q_j}}
	\leq  4\abs{\calS} \cdot \sum_{j=1}^k |\partial^1Q_j|
	\]
	which proves almost additivity of $\xi_\omega^Q$ with boundary term $b(Q_j) := 4\abs{\calS}|\partial^1Q_j|$. The $\calC$-invariance of $\xi_\omega$ follows directly from its definition.
\end{proof}

Note that almost additivity and $\calC$-invariance easily imply boundedness, see \cite{lsv2011} for instance.

\begin{corollary}
\label{cor_convergence}
	Let $\calG$ be an amenable group generated by a finite set $\calS$, $\Gamma=(\calV,\calE,\gamma)$ the associated Cayley graph and $(Q_j)$ a tempered F\o lner sequence. Then the limit
	\[
	\lim_{j\to\infty}\frac{\xi_\omega^{Q_j}}{|Q_j||\calS|} 
	\]
	exists in $\calB(\R)$ for almost all $\omega \in \Omega$ and is independent of $\omega$.
\end{corollary}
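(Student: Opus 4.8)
The plan is to apply the Banach-space valued ergodic theorem (Theorem \ref{theorem_erg}) to the map $F=\xi_\omega\from\calF\to\calB(\R)$, $Q\mapsto\xi_\omega^Q$, with $X=\calB(\R)$ and colouring $\calC=\calC_\omega$ as in \eqref{def_cC_omega}. The two structural hypotheses of that theorem are already in hand: the preceding lemma shows that $\xi_\omega$ is almost additive and $\calC_\omega$-invariant. The only remaining hypothesis is the existence of the pattern frequencies $\nu_P$ for all $P\in\bigcup_n\calP(Q_n)$, and this is exactly what the frequency lemma provides on a set $\tilde\Omega$ with $\P(\tilde\Omega)=1$. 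Hence for every fixed $\omega\in\tilde\Omega$ the colouring $\calC_\omega$ satisfies all assumptions of Theorem \ref{theorem_erg}, so the limit $\lim_{j\to\infty}\xi_\omega^{Q_j}/|Q_j|$ exists in $\calB(\R)$ and equals $\lim_{n\to\infty}\sum_{P\in\calP(Q_n)}\nu_P\,\tilde{\xi_\omega}(\tilde P)/|Q_n|$. Dividing by the constant $|\calS|$ then yields existence of the limit in the statement for all $\omega\in\tilde\Omega$.

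It remains to prove independence of $\omega$, and I would do this by showing that both ingredients of the limiting expression are $\omega$-independent on $\tilde\Omega$. The frequencies $\nu_P$ are independent of $\omega\in\tilde\Omega$ by the frequency lemma. For the symbol $\tilde{\xi_\omega}(\tilde P)$ the key observation is that $\xi_\omega^Q$ depends on $\omega$ only through the restricted colouring $\calC_\omega|_Q$. Indeed, the Dirichlet part $n_D^Q$ in \eqref{def_xi} is deterministic by \eqref{eq_n_DQ}, while the operator $H_\omega^Q$ is built from the potentials $(V(\omega)_e)_{e\in\calE_{v,0}}$ for $v\in Q$ together with the interior boundary conditions $U(\omega)_v$ for $v\in\calV_Q^i$; since $\calV_Q^i\subseteq Q$, all of this data is encoded in $\calC_\omega|_Q$. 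If $\calC_\omega|_Q\sim\calC_{\omega'}|_{Q'}$ via translation by some $x\in\calG$, then the $\calG$-action on $\Gamma$ together with the covariance relations \eqref{eq:potential} and \eqref{eq:bcond} provides a unitary identification of $H_\omega^Q$ with $H_{\omega'}^{Q'}$, so the two restricted operators have identical eigenvalue counting functions and hence identical spectral shift functions. Consequently $\tilde{\xi_\omega}(\tilde P)$ is a function of the pattern $\tilde P$ alone and does not depend on $\omega$.

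With both $\nu_P$ and $\tilde{\xi_\omega}(\tilde P)$ independent of $\omega\in\tilde\Omega$, the limit $\lim_{n\to\infty}\sum_{P\in\calP(Q_n)}\nu_P\,\tilde{\xi_\omega}(\tilde P)/|Q_n|$ is the same for every $\omega\in\tilde\Omega$, and dividing by $|\calS|$ preserves this. I expect the main obstacle to be precisely the verification that the symbol $\tilde{\xi_\omega}$ is $\omega$-independent: one must check that the inner vertices of $\Gamma_Q$ lie in $Q$ (so that no boundary data outside $Q$ enters $H_\omega^Q$) and then carefully set up the unitary intertwining the restricted operators under the group translation that realises the pattern equivalence. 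Once this covariance is in place, the remaining assembly of the three lemmas is routine.
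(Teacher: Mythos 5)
Your proposal is correct and follows exactly the route the paper intends: the corollary is obtained by applying Theorem \ref{theorem_erg} to $F=\xi_\omega$ with colouring $\calC_\omega$, using the frequency lemma for the existence of the $\nu_P$ on a full-measure set and the preceding lemma for almost additivity and $\calC_\omega$-invariance. Your additional care in verifying $\calV_Q^i\subseteq Q$ and the covariance of $H_\omega^Q$ under pattern equivalence (so that $\tilde{\xi_\omega}(\tilde P)$ is $\omega$-independent) fills in details the paper leaves implicit, and is sound.
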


\section{Proof of main theorem}
\label{sec:proof}

We now prove our main Theorem.

\begin{proof}[Proof of Theorem \ref{thm:main}]
	(i)
	First, we show convergence of $(\abs{\calE_{Q_j}}^{-1} n_\omega^{Q_j})_{j\in\N}$. By Corollary \ref{cor_convergence}, the sequence
	$(\abs{\calE_{Q_j}}^{-1} \xi_\omega^{Q_j})_{j\in\N}$ converges uniformly. Hence, there is $N\from \R\to\R$ such that 
	\[\frac{1}{\abs{\calE_{Q_j}}} n_\omega^{Q_j}= \frac{1}{\abs{\calE_{Q_j}}} \xi_\omega^{Q_j} + \sum_{s\in\calS} n_{D,s} \to N \quad \text{as}\quad j\to\infty\]
	uniformly $\P$-a.\,s. 

	(ii)
	Let $Q\subseteq \calG$ be finite.
	Define $\tilde{N}\from\R\to[0,\infty]$ by
	\begin{align*}
		\tilde{N}(\lambda) & := \frac{1}{\abs{\calE_Q}} \int_\Omega \Tr\left(\1_{\calE_Q} \1_{(-\infty,\lambda]}(H_\omega)\right)\, d\P(\omega).
	\end{align*}
	We show independence of $\tilde{N}$ of the choice of $Q$: by the invariance assumptions, we obtain that
	\[\frac{1}{\abs{\calE_{\{x\}}}}\int_\Omega \Tr\left(\1_{\calE_{\set{x}}} \1_{(-\infty,\lambda]}(H_\omega)\right)\, d\P(\omega) = \frac{1}{\abs{\calS}}\int_\Omega \Tr\left(\1_{\calE_{\set{x}}} \1_{(-\infty,\lambda]}(H_\omega)\right)\, d\P(\omega)\]
	does not depend on $x$. Hence, independence of $Q$ follows.

	(iii)
	We show the equality \eqref{eq:IDS}, i.\,e., $\tilde{N} = N$. Let $\lambda\in\R$ and $Q\subseteq \calG$ be finite.
	Then
	\begin{align*}
		\tilde{N}(\lambda) & = \frac{1}{\abs{\calE_Q}} \int_\Omega \Tr\left(\1_{\calE_Q} \1_{(-\infty,\lambda]}(H_\omega)\right)\, d\P(\omega) \\
		& = \lim_{j\to\infty} \frac{1}{\abs{\calE_{Q_j}}} \int_\Omega \Tr\left(\1_{\calE_{Q_j}} \1_{(-\infty,\lambda]}(H_\omega)\right)\, d\P(\omega) ,\\
	\end{align*}
	and
	\begin{align*}
		N(\lambda) & = \lim_{j\to\infty} \frac{1}{\abs{\calE_{Q_j}}} \Tr\left(\1_{(-\infty,\lambda]}(H_\omega^{Q_j})\right)\\
		& = \int_{\Omega} \lim_{j\to\infty} \frac{1}{\abs{\calE_{Q_j}}} \Tr\left(\1_{(-\infty,\lambda]}(H_\omega^{Q_j})\right)\, d\P(\omega) \\
		& = \lim_{j\to\infty} \frac{1}{\abs{\calE_{Q_j}}} \int_{\Omega}  \Tr\left(\1_{(-\infty,\lambda]}(H_\omega^{Q_j})\right)\, d\P(\omega),
	\end{align*}
	since $\tilde{N}$ does not depend on the choice of $Q$, $\P$ is a probability measure and $N$ is the uniform limit $\P$-a.\,s.

	It suffices to show that the measures associated with $N$ and $\tilde{N}$, respectively, are equal, which in turn follows by vague convergence of the approximating measures $(\mu_j)_{j\in\N}$ and $(\tilde{\mu}_j)_{j\in\N}$, respectively, defined by
	\begin{align*}
		\dupa{f}{\mu_j} & := \frac{1}{\abs{\calE_{Q_j}}} \int_\Omega  \Tr\left(f(H_\omega^{Q_j})\right) \, d\P(\omega), \\
		\dupa{f}{\tilde{\mu}_j} & := \frac{1}{\abs{\calE_{Q_j}}} \int_\Omega \Tr\left(\1_{\calE_{Q_j}} f(H_\omega)\right)\, d\P(\omega), 
	\end{align*}
	for $f\in C_0(\R)$.

	Let us define the set of functions
\[
 \mathcal R:=\set{t\mapsto (t-z)^{-1};\; z\in\C\setminus\R}.
\]
	It is easy to see that the algebra $\alg(\mathcal{R})$ generated by $\mathcal{R}$ separates the points, that it is closed under conjugation and for every $x\in\R$ there is $f\in \alg(\mathcal{R})$ with $f(x)\neq 0$. Thus, the Stone-Weierstra{\ss} Theorem (\cite[Theorem A.10.1]{de2009}) implies that the closure of $\alg(\mathcal{R})$ equals $C_0(\R)$. Moreover, using Cauchy's integral formula, one obtains, that if for all $f\in \mathcal{R}$ one has
	\begin{align}   \label{eq:testfct}
	 \int_\Omega \frac{1}{\abs{\calE_{Q_j}}} \Tr\left(\1_{\calE_{Q_j}} f(H_\omega) - f(H_\omega^{Q_j})\right)\, d\P(\omega) \to 0\quad\text{as}\quad j\to\infty,
	\end{align}
	then this holds true for all $f\in\alg(\mathcal{R})$ as well.
	Hence, in order to prove that \eqref{eq:testfct} holds true for all $f\in C_0(\R)$, it is sufficient to verify this for all $f\in \mathcal{R}$.
	Let $f\in \mathcal{R}$, i.e., $f(t):= (t-z)^{-1}$.
	For $j\in\N$ we can split $\Gamma$ into $\Gamma_{Q_j}$ and $\Gamma_{\calG\setminus Q_j}$. Then $H_\omega$ and
	$H_\omega^{Q_j}\oplus H_\omega^{\calG\setminus Q_j}$ differ only by the boundary conditions on the set $\calV_{Q_j}^\partial$. Thus, by the second resolvent identity,
	\[D:= f(H_\omega) - f(H_\omega^{Q_j}\oplus H_\omega^{\calG\setminus Q_j})\]
	is an operator of rank at most $4\abs{\calS}\abs{\calV_{Q_j}^\partial}$. Moreover, $D$ is bounded by $2\abs{\Im z}^{-1}$, since $f$ is bounded by $\abs{\Im z}^{-1}$. Therefore,
	\begin{align*}
		\abs{\Tr\left(\1_{\calE_{Q_j}} f(H_\omega) - f(H_\omega^{Q_j})\right)} & = \abs{\Tr\left(\1_{\calE_{Q_j}} \left(f(H_\omega) - f(H_\omega^{Q_j}\oplus H_\omega^{\calG\setminus Q_j})\right)\right)} \\
		& \leq \frac{8\abs{\calS}}{\abs{\Im z}} \abs{\calV_{Q_j}^\partial}.
	\end{align*}
	As $(Q_j)$ is a F\o lner sequence, $\calV_{Q_j}^\partial\subseteq \partial^1Q_j$ and $|\calE_{Q_j}|=|\calS||Q_j|$ we obtain
	\[\frac{1}{\abs{\calE_{Q_j}}} \abs{\Tr\left(\1_{\calE_{Q_j}} f(H_\omega) - f(H_\omega^{Q_j})\right)}
		\leq \frac{8\abs{\calS}}{\abs{\Im z}} \frac{\abs{\calV_{Q_j}^\partial}}{\abs{\calE_{Q_j}}} = \frac{8}{\abs{\Im z}} \frac{\abs{\calV_{Q_j}^\partial}}{\abs{Q_j}}
		\to 0\]
	as $j\to\infty$. Since $\P$ is a probability measure, Lebesgue's dominated convergence theorem yields the assertion.
\end{proof}

\section{Application to Heisenberg group}

In the following we discuss the above results in the case where $\calG$ equals to the discrete Heisenberg group $H_3$,
which consists of the elements
\[
(a,b,c):=\begin{pmatrix} 1& 0& 0\\ a &1& 0 \\ c&b&1 \end{pmatrix}, \quad (a,b,c\in\Z).
\]
The group action is induced by the usual matrix multiplication.
$H_3$ is an example of a non-abelian group, which is of polynomial growth.
Therefore it is amenable, as well as residually finite.
One can show, see \cite{lsv2011}, that $H_3$ is generated by $\calS=\{(1,0,0),(0,1,0)\}$ and that $(Q_j)$ given by
\[
Q_j:=\{(a,b,c)\mid 0\leq a,b< j, 0\leq c< j^2 \} \quad (j\in\mathbb N)
\]
is a F\o{}lner sequence. 
We denote the associated metric Cayley graph by $\Gamma=\Gamma(\calG,\calS)=(\calV,\calE,\gamma)$.

Let $(\Omega,\calA,\P)$ be a probability space and let $(H_\omega)_{\omega\in\Omega}$ be
a random Schr\"odinger operator on $\cHg=\bigoplus_{e\in\calE }L^2(0,1)$ defined as in 
\eqref{def:H_omega1} and \eqref{def:H_omega2} (where $l_s=1$ for all $s\in\calS$).

Then Theorem \ref{thm:main} proves that for increasing $j$ 
the eigenvalue counting functions $N_\omega^{Q_j}$ given by
\[N_\omega^Q(\lambda):= \frac{1}{\abs{\calE_Q}}n_\omega^Q(\lambda)
=\frac{1}{\abs{\calE_Q}}|\set{n\in\N \st \lambda_n(H_\omega^Q)\leq \lambda}|
 \quad(\lambda\in \R, Q\in \calF)\]
converge for $\P$-a.\,a.\ $\omega\in\Omega$ uniformly
in the energy variable to the \emph{integrated density of states} $N\from\R\to\R$ defined by
\[
 N(\lambda) := \frac{1}{\abs{\calE_Q}} \int_\Omega \Tr\left(\1_{\calE_Q} \1_{(-\infty,\lambda]}(H_\omega)\right)\, d\P(\omega)\quad(\lambda\in \R),
\]
where $Q\subseteq \calG$ is an arbitrary finite set. Note that for $Q\in\calF$ as usual $(\lambda_n(H_\omega^Q))_{n\in\N}$ is the increasing sequence of eigenvalues of $H_\omega^Q$ counted by multiplicity.

\appendix

\section{Trace class operators on $L(\mathcal{H}_\Gamma)$}

We show that the integral in the Pastur-Shubin formula is finite, i.\,e., that the operator $\1_{\calE_Q}\1_{(-\infty,\lambda]}(H_\omega)$ is trace class for all $\omega\in\Omega$.

Let $H$ be a selfadjoint and semibounded Schr\"odinger operator on $\cHg$ as in section \ref{sec:random_SO}.
Let $Q\subseteq\calG$ be finite. Since $H^Q\oplus H^{\calG\setminus Q}-H$ is of finite rank (they differ only on the boundary conditions at $\calV_{Q}^\partial$), also
\[(H+c)^{-1} - (H^Q\oplus H^{\calG\setminus Q}+c)^{-1} = (H+c)^{-1}(H-H^Q\oplus H^{\calG\setminus Q})(H^Q\oplus H^{\calG\setminus Q}+c)^{-1}\]
has finite rank for sufficiently large $c>0$. Hence,
\[\1_{\calE_Q}((H+c)^{-1} - (H^Q\oplus H^{\calG\setminus Q}+c)^{-1})\]
has finite rank and is therefore trace class.

By \cite[Proposition 5.3 (ii)]{lss2008}, $(H^Q+c)^{-1/2}$ is a continuous 
linear mapping from $\mathcal{H}_{\Gamma_Q}$ to $\bigoplus_{v\in 
Q}\bigoplus_{s\in\calS} L^\infty(0,l_s)$ for sufficiently large $c>0$. Hence, 
by \cite[Satz 6.14]{w2000}, $\1_{\calE_Q} (H^Q+c)^{-1/2}$ is Hilbert-Schmidt. 
But
\[(\1_{\calE_Q} (H^Q+c)^{-1/2})^* = (H^Q+c)^{-1/2} \1_{\calE_Q}\]
is again Hilbert-Schmidt, so
\[\1_{\calE_Q}(H^Q+c)^{-1}\1_{\calE_Q} = \1_{\calE_Q} (H^Q+c)^{-1/2} (H^Q+c)^{-1/2} \1_{\calE_Q}\]
is trace class and therefore also trace class on $\cHg$.

Since
\[\1_{\calE_Q} (H^Q\oplus H^{\calG\setminus Q}+c)^{-1} = \1_{\calE_Q} (H^Q+c)^{-1/2} (H^Q+c)^{-1/2} \1_{\calE_Q},\]
we conclude that
\[\1_{\calE_Q}(H+c)^{-1} = \1_{\calE_Q}((H+c)^{-1} - (H^Q\oplus H^{\calG\setminus Q}+c)^{-1}) + \1_{\calE_Q} (H^Q\oplus H^{\calG\setminus Q}+c)^{-1}\]
is trace class.

Now, we have
\begin{align*}
	\1_{\calE_Q}\1_{(-\infty,\lambda]}(H) & = \1_{\calE_Q}(H+c)^{-1}(H+c)\1_{(-\infty,\lambda]\cap \sigma(H)}(H) \\
	& = \1_{\calE_Q}(H+c)^{-1}\left(z\mapsto (z+c)\1_{(-\infty,\lambda]\cap \sigma(H)}(z)\right)(H).
\end{align*}
Since $\left(z\mapsto (z+c)\1_{(-\infty,\lambda]\cap \sigma(H)}(z)\right)$ is bounded, this operator is trace class as well.
Note that the trace norm depends on $\norm{(H^Q+c)^{-1/2}}_{\mathcal{H}_{\Gamma_Q}\to\bigoplus_{v\in 
Q}\bigoplus_{s\in\calS} L^\infty(0,l_s)}$ and $\norm{\left(z\mapsto (z+c)\1_{(-\infty,\lambda]\cap \sigma(H)}(z)\right)}_\infty$.
Considering now $(H_\omega)_{\omega\in\Omega}$ instead of $H$, these two norms can be bounded uniformly in $\omega$, since there exists $c>0$ such that $H_\omega+c\geq 0$ for all $\omega\in\Omega$.
Since $\P$ is a probability measure this implies the claim.

\section*{Acknowledgements}

The authors thank Daniel Lenz and Ivan Veseli\'c for helpful discussions.

{\frenchspacing

}

\bigskip
\noindent
Felix Pogorzelski \\
Friedrich-Schiller-Universit\"at Jena \\
Fakult\"at f\"ur Mathematik und Informatik \\
Ernst-Abbe-Platz 2, 07743 Jena, Germany\\
{\tt felix.pogorzelski@uni-jena.de}\\[3ex]
Fabian Schwarzenberger and Christian Seifert\\
Technische Universit\"at Chemnitz\\
Fakult\"at f\"ur Mathematik\\
09107 Chemnitz, Germany \\
{\tt fabian.schwarzenberger@mathematik.tu-chemnitz.de}\\
{\tt christian.seifert@mathematik.tu-chemnitz.de}

\end{document}